\documentclass[11pt]{article}
\usepackage{amsmath,amssymb,amsfonts,amsthm,epsfig}
\usepackage[usenames,dvipsnames]{xcolor}
\usepackage{bm,xspace}
\usepackage{tcolorbox}
\usepackage{cancel}
\usepackage{fullpage}
\usepackage{liyang}
\usepackage{framed}
\usepackage{verbatim}
\usepackage{enumitem}
\usepackage{array}
\usepackage{multirow}
\usepackage{afterpage}
\usepackage{mathrsfs}
\usepackage{pifont} 
\usepackage{chngpage}
\usepackage[normalem]{ulem}
\usepackage{boxedminipage}
\usepackage{caption}
\usepackage{forest}

%
%

\def\colorful{0}

\ifnum\colorful=1
\newcommand{\violet}[1]{{\color{violet}{#1}}}

\fi
\ifnum\colorful=0
\newcommand{\violet}[1]{{{#1}}}

\fi

%
%


\newcommand{\BuildDT}{\textsc{BuildDT}}

\newcommand{\pruned}{\mathrm{pruned}}
\newcommand{\bias}{\mathrm{bias}}
\newcommand{\depth}{\mathrm{depth}}
\newcommand{\Prune}{\textsc{Prune}}

\makeatletter
\newtheorem*{rep@theorem}{\rep@title}
\newcommand{\newreptheorem}[2]{
\newenvironment{rep#1}[1]{
 \def\rep@title{#2 \ref{##1}}
 \begin{rep@theorem}\itshape}
 {\end{rep@theorem}}}
\makeatother

\newreptheorem{theorem}{Theorem}

\newcommand{\pparagraph}[1]{\bigskip \noindent {\bf {#1}}}

\begin{document}

\title{Properly learning decision trees in almost polynomial time\footnote{A preliminary version of this paper appeared in the proceedings of the 62nd Annual IEEE Symposium on Foundations of Computer Science (FOCS 2021).} 
 \vspace{15pt}}

\author{Guy Blanc \vspace{8pt} \\ \hspace{-5pt}{\sl Stanford} \and \hspace{10pt} Jane Lange \vspace{8pt} \\
\hspace{4pt}  {\sl MIT}
\and Mingda Qiao \vspace{8pt}\\ \hspace{-8pt} {\sl Stanford} 
\and Li-Yang Tan \vspace{8pt} \\ \hspace{-8pt} {\sl Stanford}}

\date{\vspace{15pt}\small{\today}}

\maketitle


\begin{abstract} 
We give an $n^{O(\log\log n)}$-time membership query algorithm for properly and agnostically learning decision trees under the uniform distribution over $\bn$.  Even in the realizable setting, the previous fastest runtime was $n^{O(\log n)}$, a consequence of a classic algorithm of Ehrenfeucht and Haussler.  

 Our algorithm shares similarities with practical heuristics for learning decision trees, which we augment with additional ideas to circumvent known lower bounds against these heuristics.  To analyze our algorithm, we prove a new structural result for decision trees that strengthens a theorem of   O'Donnell, Saks, Schramm, and Servedio.  While the OSSS theorem says that every decision tree has an influential variable, we show how every decision tree can be ``pruned"  so that {\sl every} variable in the resulting tree is influential.
\end{abstract} 

\thispagestyle{empty}

\newpage
\setcounter{page}{1}

\section{Introduction} 

Decision trees are a simple and effective way to represent boolean functions $f : \bits^n \to \bits$. Their logical, flow-chart-like structure makes them easy to understand, and they are the canonical example of an interpretable model in machine learning.  They are also fast to evaluate: the complexity of evaluating a decision tree on an input scales with the depth of the tree, which is often much smaller than the dimension $n$ of $f$.  

The algorithmic problem of converting a function $f$ into a decision tree representation $T$ has therefore been extensively studied by a number of communities spanning  both theory and practice.  Naturally, we would like $T$ to be as small as possible, ideally close to the optimal decision tree size of~$f$.  If we require $T$ to compute $f$ {\sl exactly}, this is unfortunately likely an intractable problem, even if  $T$ is allowed to be larger than the optimal decision tree for $f$: finding an approximately minimal decision tree for a given function is NP-hard~\cite{LR76,ZB00,Sie08,AH12}.

We therefore allow $T$ to err on a small fraction of inputs.  Our main result is a new algorithm for this problem:

\begin{theorem}
\label{thm:main} There is an algorithm which, given as input $\eps > 0$, $s\in \N$, and query access to a function $f : \bits^n\to\bits$ that is promised to be $\opt_s$-close to a size-$s$ decision tree, runs in time
\[ \tilde{O}(n^2) \cdot (s/\eps)^{O(\log((\log s)/\eps))} \]
and outputs a size-$s$ decision tree $T$ that w.h.p.~satisfies $\ds\Prx_{\mathrm{uniform}~ \bx}[T(\bx)\ne f(\bx)] \le \opt_s+\eps$. 
\end{theorem}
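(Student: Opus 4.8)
The plan is to build the output tree greedily and top-down---in the style of ID3/CART---and then to analyze the result using the structural pruning theorem. Since $f$ is promised to be $\opt_s$-close to some size-$s$ tree $g$, it suffices to produce a size-$s$ tree $T$ that agrees with $f$ on all but an $\opt_s+\eps$ fraction of inputs; the witness $g$ is used only in the analysis. Throughout, each node of the tree under construction corresponds to a restriction $\rho$, the partial assignment read along the root-to-node path, and the relevant objects at that node are the restricted functions $f_\rho$ and $g_\rho$. There are three components: (i) a structural guarantee that every node admits a variable worth splitting on, (ii) a quantitative argument that bounded depth suffices, and (iii) a pruning step that turns the (large) tree we build into a size-$s$ tree.

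\emph{The structural input.} Strengthening OSSS, I would show that any size-$s$ decision tree $g$ can be pruned---replacing subtrees by leaves---into a tree $g^{\pruned}$ of size at most $s$ that is $O(\eps)$-close to $g$ and in which every variable is $\tau$-influential for $\tau = \tilde{\Omega}(\eps/\log s)$; where this process gets stuck the residual function is already close to constant. The proof is the natural bottom-up pruning: whenever a subtree computes a function within $\tau$ of a constant, replace it by that constant, and bound the accumulated error using the OSSS inequality $\mathrm{Var}[\cdot] \le \sum_i \delta_i\, \mathrm{Inf}_i \le (\max_i \mathrm{Inf}_i)\cdot \mathbb{E}[\mathrm{depth}]$ together with $\mathbb{E}[\mathrm{depth}] \le \log s$, weighting each pruned subtree by the probability of reaching it (the maximal pruned subtrees are disjoint, so these probabilities sum to at most $1$). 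For the algorithm the consequence is: at any node $\rho$ where $f_\rho$ is not yet close to constant, $f_\rho$ is close to $g^{\pruned}_\rho$, which has an influential variable; since the influences of $f_\rho$ and $g^{\pruned}_\rho$ differ by at most their distance, an empirical estimate of the influences of $f_\rho$ (using $\mathrm{poly}(1/\tau)$ queries per variable, with a union bound over the at most $n$ variables) identifies a variable genuinely influential in the target.

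\emph{The construction and its depth.} The naive greedy rule---split on the single most influential variable---provably fails, since a high-influence variable need not reduce the variance of $f_\rho$ at all (parity-type lower bounds against ID3/CART). The fix I would use is to let each \emph{stage} of the construction condition on a small set of $O(\log\log(s/\eps))$ influential variables simultaneously, chosen so that their joint conditioning makes progress on the target even when no single variable does; this makes the branching factor per stage $\mathrm{polylog}(s/\eps)$ rather than $2$. I would then track a potential $\Phi_\rho$ equal to the complexity of the pruned target restricted to $\rho$---for instance its rank, which is at most $\log s$---and argue that each stage decreases $\Phi$ by one. After $O(\log(s/\eps))$ stages every leaf $\rho$ has $\Phi_\rho = 0$, so $g^{\pruned}_\rho$, and hence $f_\rho$ up to $O(\opt_s+\eps)$, is essentially constant; labelling each leaf by the majority value of $f_\rho$ then gives error $\opt_s + O(\eps)$. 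The tree has depth $O(\log(s/\eps)\cdot\log\log(s/\eps))$ and hence $(s/\eps)^{O(\log\log(s/\eps))}$ leaves, controlling both the running time and the query complexity; the $\tilde{O}(n^2)$ factor comes from estimating the influences of all $n$ variables at each node and the $O(n)$ cost of handling each sample.

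\emph{Properness, and the main obstacle.} The tree $\hat{T}$ built above has far more than $s$ leaves, so to output a size-$s$ tree I would run a bottom-up dynamic program over $\hat{T}$ that computes, for every node and every budget $s' \le s$, the least empirical disagreement with $f$ achievable by a size-$s'$ pruning of that node's subtree, and output the best size-$s$ pruning of the root; a $\mathrm{poly}(s/\eps)$-sized sample suffices to estimate these disagreements by a standard uniform-convergence argument over size-$s$ trees. A good size-$s$ pruning exists because the construction was designed to keep refining the pruned target $g^{\pruned}$, so a pruning of $\hat{T}$ matching (essentially) $g^{\pruned}$ has size at most $s$ and error at most $\opt_s + O(\eps)$. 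The crux of the whole proof is the quantitative progress argument: showing that $O(\log\log(s/\eps))$ variables per stage reliably decrease the target's complexity---so that the exponent carries $\log\log$ rather than $\log$---while the only handle on the target is noisy influence estimates of $f$, and ensuring that the errors contributed by pruning, by agnostic noise, and by finite sampling do not compound over the $\log(s/\eps)$ stages. This is exactly where the strengthened OSSS theorem must be applied in a careful, node-by-node fashion.
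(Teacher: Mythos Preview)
Your proposal has a genuine gap, located exactly where you flag ``the crux.'' In the agnostic setting you plan to select, at each node $\rho$, a set of $O(\log\log(s/\eps))$ influential variables of $f_\rho$, but you never establish that the \emph{right} variable---the one queried at the root of $g^{\pruned}_\rho$---is among them. Your observation that $|\Inf_i(f_\rho)-\Inf_i(g^{\pruned}_\rho)|$ is controlled by their distance only shows that this variable has non-negligible influence on $f_\rho$; it says nothing about its \emph{rank} among the influences of $f_\rho$. In the agnostic setting $f$ can have $\Omega(n)$ variables with influence $\Theta(\opt_s)$ that are entirely irrelevant to $g$ (e.g.\ $f=g\oplus h$ with $h$ a balanced function on fresh coordinates); when $\opt_s \gg \eps/\log s$ these noise variables dominate the influence ranking, your stage picks only noise, and there is no reason the potential $\Phi_\rho$ should decrease. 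Even in the realizable case the claim that $O(\log\log)$ top-influence variables suffice to drop the rank is asserted rather than argued: the OSSS-type bound only places the root of $g_\rho$ among the top $(\log s)^2/\eps$ variables, not the top $O(\log\log)$.

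The paper resolves both issues with ideas absent from your outline. First, it does not commit greedily: the dynamic program tries \emph{every} variable with influence at least $\tau$ as a candidate root, and the pruning lemma---applied with respect to $f$ (or $f_\delta$) itself, not the unknown $g$---guarantees that one of these candidates is the root of a good size-$s$ tree, so no transfer of influence between $f$ and $g$ is ever needed. The $\log\log$ in the exponent arises not from a clever per-stage selection rule but simply because the branching factor is $\polylog(s/\eps)$ over depth $\log(s/\eps)$. Second, and this is the key missing idea, the agnostic algorithm runs on the \emph{smoothed} function $f_\delta=\mathrm{T}_{1-\delta}f$ rather than on $f$. Smoothing caps the total (squared) influence at $O(1/\delta)$ regardless of $f$, so at every node only $\poly(\log s,1/\eps)$ variables have influence $\ge\tau$, restoring the branching bound even when $f$ itself is riddled with influential noise variables. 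Correctness survives because the noise operator is self-adjoint and size-$s$ trees are noise-stable, so $\dist(T,f)$ and $\dist(T,f_\delta)$ differ by $O(\eps)$. Without smoothing (or a substitute for it), your plan does not yield an algorithm in the agnostic setting.
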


For $s = \poly(n)$ and $\eps \ge 1/\polylog(n)$, our algorithm runs in almost polynomial time, $n^{O(\log\log n)}$.  Even in the realizable setting  ($\opt_s = 0$), the previous fastest algorithms took quasipolynomial time, $n^{\Omega(\log n)}$, even for constant $\eps$.  This was  the state of the art even for algorithms with access to an explicit representation of $f$, rather than just query access. 

Another interesting setting is when the algorithm is only given uniform random examples labeled by $f$ rather than query access. For this setting, we have the following result:
\begin{theorem}
    \label{thm:monotone}
    In the context of \Cref{thm:main}, if $f$ is \emph{monotone}, our algorithm uses only random labeled examples $(\bx,f(\bx))$ where $\bx\sim \bn$ is uniformly random.
\end{theorem}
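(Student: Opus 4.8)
The plan is to re-examine the algorithm behind \Cref{thm:main} and verify that its only interaction with the target function $f$ is through two kinds of statistics: (i) the \emph{biases} $\mathbb{E}[f_\rho]$ of restricted functions $f_\rho$, together with the closely related averages used to label leaves and to gauge the agnostic error; and (ii) the influences $\mathrm{Inf}_i[f_\rho]$ of coordinates $i$ under those restricted functions, which is the information the strengthened OSSS theorem supplies and which the algorithm uses to decide which variable to query at each internal node. Membership queries enter in only two ways: to draw a labeled example from a subcube $\{x : x \text{ consistent with } \rho\}$, and --- for influences --- to compare $f$ on a point $y$ and its neighbor $y^{\oplus i}$. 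The goal is to remove the need for both when $f$ is monotone.

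First one records a standard simplification: we may assume the size-$s$ tree we are matching has depth at most $d := \lceil\log(s/\eps)\rceil$. Indeed, truncating every root-to-leaf path of a size-$s$ decision tree at depth $d$ and replacing the excised subtrees by arbitrary leaves changes the computed function on fewer than $s\cdot 2^{-d}\le\eps$ inputs, since there are at most $s$ leaves below depth $d$ and each is reached with probability less than $2^{-d}$. Hence the restrictions $\rho$ the recursion conditions on can be taken to satisfy $|\rho|\le d$, so the subcube consistent with $\rho$ has mass $2^{-|\rho|}\ge\eps/s$. Rejection sampling --- draw $\bx\sim\bn$ uniformly and discard it unless it agrees with $\rho$ --- then yields a uniform labeled example from that subcube after $O(s/\eps)$ draws in expectation, inflating the sample and time complexity by only a $\poly(s/\eps)$ factor, comfortably within the $(s/\eps)^{O(\log\log(s/\eps))}$ bound of \Cref{thm:main}. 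This already implements the statistics of type (i): each is an empirical mean of $f$ (or of a fixed function of $f$) over a shallow subcube, and no monotonicity is needed.

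The one place monotonicity is essential is estimating influences, and here the plan uses two facts. (a) A restriction of a monotone function is monotone, so every $f_\rho$ arising in the recursion is monotone. (b) For monotone $g:\bits^m\to\bits$ one has $\mathrm{Inf}_i[g]=\hat g(\{i\})=\mathbb{E}_{y}[\,g(y)\,y_i\,]$. Combining these, $\mathrm{Inf}_i[f_\rho]$ is simply the degree-one Fourier coefficient of $f_\rho$, equivalently the mean of $f(\bx)\,\bx_i$ over $\bx$ in the subcube consistent with $\rho$ --- an empirical quantity estimable from exactly the rejection-sampled labeled examples above, to additive error $\gamma$ using $\poly(1/\gamma)$ of them (and for all $n$ coordinates at once). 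Substituting these example-based estimators for the query-based influence and bias subroutines, and using the depth bound to keep the rejection overhead polynomial, leaves the algorithm's guarantee unchanged while making all its accesses to $f$ uniform random labeled examples.

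The step I expect to be the main obstacle is the bookkeeping needed to confirm that the \emph{agnostic} version of the algorithm really does confine its use of $f$ to statistics of the two types above: in particular that whatever $\score$ function it uses to choose a split variable, and whatever criterion it uses to decide when to stop recursing and output a leaf, can be expressed through biases and influences of restricted functions (equivalently, through their degree-$\le 1$ Fourier information), and that it suffices to estimate these to accuracy $1/\poly(s/\eps)$ so that the sample complexity stays polynomial in $s/\eps$. Granting that, the monotone identity $\mathrm{Inf}_i[g]=\hat g(\{i\})$ together with closure of monotonicity under restriction does the rest.
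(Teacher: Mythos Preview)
Your sample-only replacements --- depth-$d$ truncation, rejection sampling on the resulting subcubes, and the identity $\Inf_i(g)=\hat g(\{i\})$ for monotone $g$ --- are exactly what the paper uses. But the obstacle you flag in your last paragraph is not bookkeeping; it is a real structural difference. The agnostic algorithm of \Cref{thm:main} does \emph{not} call $\BuildDT$ on $f$: it calls it on the smoothed function $f_\delta$, and the score at a node is $\Inf_i((f_\delta)_\pi)$, not $\Inf_i(f_\pi)$. So your type-(ii) statistics are not what that algorithm actually computes, and substituting sample-based estimators for $\Inf_i(f_\pi)$ does not implement it. (Evaluating $(f_\delta)_\pi$ at a point means averaging $f$ over $\delta$-noisy perturbations of a \emph{specified} input --- a membership query.)

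The paper's fix for monotone $f$ is to abandon smoothing and run $\BuildDT$ on $f$ directly. Correctness then follows from the pruning lemma exactly as in the realizable analysis. What this breaks, and what your proposal does not address, is the runtime: smoothing was introduced in the agnostic analysis precisely to bound the candidate set $S=\{i:\Inf_i(f_\pi)\ge\tau\}$, since the realizable bound $\sum_i\Inf_i(f_\pi)\le\log s$ fails when $f$ is not itself a small tree. Here the monotone identity is used a second time, not to estimate influences but to bound how many can be large: since every $f_\pi$ is monotone, $\sum_i\Inf_i(f_\pi)^2=\tfrac14\sum_i\widehat{f_\pi}(\{i\})^2\le\tfrac14$ by Parseval, so $|S|\le 1/(4\tau^2)$ independent of any decision-tree structure. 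With that bound in place, the rest of your outline goes through verbatim.
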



\subsection{Background and context} 

In the language of learning theory,~\Cref{thm:main} gives a query algorithm for properly and agnostically learning decision trees under the uniform distribution. We now overview previous algorithms for this and related problems. 

Ehrenfeucht and Haussler~\cite{EH89}, in an early paper following the introduction of the PAC learning model, gave an $n^{O(\log s)}$ time algorithm for properly learning size-$s$ decision trees. \cite{EH89}'s algorithm works in the more general distribution-free setting and only uses random examples.  On the other hand,~\cite{EH89} assumes the realizable setting, and their algorithm is not known to  extend to the agnostic setting.  This limitation is likely inherent: being an Occam algorithm, its analysis crucially relies on noiseless examples.   Furthermore,~\cite{EH89}'s algorithm is {\sl weakly} proper, in the sense that its decision tree hypothesis can be as large as $n^{\Omega(\log s)}$.  A (strongly) proper algorithm returns a hypothesis that belongs to the target concept class; in this case, a size-$s$ decision tree hypothesis for a size-$s$ decision tree target. 

Since the work of Ehrenfeucht and Haussler,  a couple of alternative algorithms for properly learning decision trees have been developed in the uniform-distribution setting.  These algorithms are quite different from~\cite{EH89}'s and from each  other.  Mehta and Raghavan~\cite{MR02} gave an $n^{O(\log s)}$ time algorithm that uses random examples, and more recently~\cite{BLT-ITCS} gave a $\poly(n) \cdot s^{O(\log s)}$ time membership query algorithm.  For the standard setting where $s = \poly(n)$, these runtimes are still $n^{\Omega(\log n)}$, just like~\cite{EH89}'s. 

Therefore, while~\cite{EH89}'s $n^{O(\log n)}$ runtime for properly learning polynomial-size decision trees has been matched twice in the uniform-distribution setting, it has remained unsurpassed for over three decades.  Furthermore, the analyses of all three algorithms are known to be tight: for each of them, there are targets for which the algorithm can be shown to require  $n^{\tilde{\Omega}(\log n)}$ time.  

\Cref{table} summarizes of how our algorithm compares with existing ones:  

\vspace{10pt} 
\begin{table}[H]
  \captionsetup{width=.9\linewidth}
\renewcommand{\arraystretch}{1.7}
\centering
\begin{tabular}{|c|c|c|c|c|}
\hline
  Reference  & Running time  & Hypothesis size   & ~~Access to target~~ &  Agnostic? \\ \hline
\cite{EH89} & $n^{O(\log s)}$ & $n^{O(\log s)}$ & Random examples & $\times$ \\ [.2em] \hline
 \cite{MR02} & $ n^{O(\log s)}$  & $s$ & Random examples & $\checkmark$ \\ [.2em]  \hline 
 \cite{BLT-ITCS} & $\poly(n)\cdot s^{O(\log s)}$ & $s^{O(\log s)}$ & Queries & $\times$ \\ [.2em] 
 \hline \hline
  {\bf This work} & ~~$\poly(n) \cdot s^{O(\log\log s)}$~~  & $s$ & Queries & $\checkmark$ \\ [.2em] \hline
\end{tabular}
\caption{Algorithms for properly learning size-$s$ decision trees.~\cite{EH89}'s algorithm works in the more general distribution-free setting, whereas all others, including ours, work in the uniform-distribution setting.}  
\label{table}
\end{table}

\paragraph{Improper algorithms.}  While the focus of our work is on proper learning, the problem of improperly learning decision trees, where the hypothesis is not required to itself be a decision tree, is also the subject of intensive study. 
Kusilevitz and Mansour~\cite{KM93} gave a polynomial-time membership query algorithm for learning polynomial-size decision trees under the uniform distribution; this was subsequently extended to the agnostic setting by Gopalan, Kalai, and Klivans~\cite{GKK08}.  Both works employ Fourier-analytic techniques, and their algorithms return the sign of a Fourier polynomial as their hypothesis.  

Other works on improper learning of decision trees include~\cite{Riv87,Blu92,Han93,Bsh93,BFJKMR94,HJLT96,JS06,OS07,KS06,KST09,HKY18,CM19}.

\paragraph{On the use of membership queries.} It would be preferable if our algorithm in~\Cref{thm:main} did not require membership queries and instead relied only on random examples.  However, there are well-known barriers to obtaining such an improvement of our algorithm, even an improper one and even just within the realizable setting. 

First, no such statistical query algorithm exists: any SQ algorithm for learning polynomial-size decision trees has to take $n^{\Omega(\log n)}$ time~\cite{BFJKMR94}.  Second, we observe that our $\poly(n)\cdot s^{O(\log\log s)}$ runtime is fixed-parameter tractable in `$s$'.  Obtaining a $\poly(n) \cdot \Phi(s)$ time algorithm that only uses random examples, for any growth function $\Phi$, would give the first polynomial-time algorithm for learning $\omega_n(1)$-juntas.  This would be a breakthrough on a notorious open problem~\cite{BL97}; current algorithms for learning $k$-juntas take time $n^{\Omega(k)}$~\cite{MOS04,Val15}.   





\section{Overview of our approach} 
\label{sec:overview} 
The starting point of our work is~\cite{BLT-ITCS}'s $\poly(n)\cdot s^{O(\log s)}$ time algorithm for the realizable setting.  We begin with a brief overview of their algorithm, followed by a description of how we obtain our improved $\poly(n) \cdot s^{O(\log\log s)}$ time algorithm in the realizable setting. We then explain how we extend our algorithm to the agnostic setting.

\paragraph{\cite{BLT-ITCS}'s greedy algorithm.}  At the heart of~\cite{BLT-ITCS}'s algorithm, as well as ours, is the notion of the {\sl influence} of a variable on a function.  For a function $f : \bits^n\to\bits$ and a variable $i\in [n]$, the influence of $i$ on $f$ is the quantity $\Pr[f(\bx)\ne f(\bx^{\sim i})]$, where $\bx\sim\bn$ is uniformly random and $\bx^{\sim i}$ denotes $\bx$ with its $i$-th coordinate rerandomized. 

\cite{BLT-ITCS} analyzes a simple greedy algorithm for constructing a decision tree $T$ for~$f$: 
\begin{enumerate}
    \item Using membership queries to $f$, identify the variable $i\in [n]$ with (approximately) the largest influence on $f$. 
    \item Query $x_i$ at the root of $T$.
    \item Build the left and right subtrees of $T$ by recursing on $f_{x_i=-1}$ and $f_{x_i=1}$ respectively. 
\end{enumerate}

\cite{BLT-ITCS} proved that growing this tree to size $s^{O(\log s)}$ yields a high-accuracy hypothesis for $f$. Their algorithm, like~\cite{EH89}'s, is weakly proper.

\paragraph{A near-matching lower bound.}   \cite{BLT-ITCS} provided a near-matching lower bound showing that their analysis of their algorithm is essentially tight.  They exhibited a size-$s$ decision tree target $f$ such that the tree grown by their algorithm has to reach size $s^{\tilde{\Omega}(\log s)}$ before achieving any nontrivial accuracy.  

\subsection{Our algorithm and its analysis} 

\cite{BLT-ITCS}'s algorithm formalizes the intuition, drawn from decision tree learning heuristics used in practice (e.g.~ID3, CART, C4.5), that the most influential variable is a ``somewhat good" root: the greedy strategy of recursively querying the most influential variable converges to a high-accuracy hypothesis at size $s^{O(\log s)}$.  Their lower bound establishes the limitations of this strategy.

At a high level, we obtain our improved algorithm by showing that {\sl there's an even better root among the $\polylog(s)$ most influential variables}.  Rather than committing to the single most influential variable as the root of our tree, we consider the set of $\polylog(s)$ most influential variables as candidate roots.  We prove the existence of a variable $x_i$ within this set such that growing a size-$s$ tree with $x_i$ as the root results in a high-accuracy hypothesis for $f$.

\subsubsection{Our key new tool: A pruning lemma for decision trees} 

The analysis of our algorithm is driven by a new structural lemma for decision trees.   This lemma generalizes a result of O'Donnell, Saks, Schramm, Servedio~\cite{OSSS05}---the {\sl OSSS inequality}---which is the crux of~\cite{BLT-ITCS}'s analysis of their algorithm: 

\begin{theorem}[OSSS inequality] 
\label{thm:OSSS}
Let $f : \bn\to\bits$ be a size-$s$ decision tree.  Then: 
\[ \max_{i\in [n]} \{\Inf_i(f)\} \ge \frac{\Var(f)}{2\log s}, \]
where $\Var(f)$ denotes the variance of the random variable $f(\bx)$. 
\end{theorem}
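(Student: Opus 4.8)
The plan is to derive this from the more general, ``weighted'' form of the OSSS inequality: for \emph{every} decision tree $T$ computing $f$,
\[
\Var(f)\ \le\ \sum_{i=1}^{n}\delta_i(T)\cdot\Inf^{\mathrm{edge}}_i(f),\qquad \delta_i(T):=\Pr_{\bx}\big[\,T\text{ queries }x_i\text{ on input }\bx\,\big],
\]
where $\Inf^{\mathrm{edge}}_i(f):=\Pr_{\bx}[f(\bx)\ne f(\bx^{\oplus i})]=2\,\Inf_i(f)$ is the edge influence, a factor of $2$ larger than the rerandomization normalization $\Inf_i$ used in this paper. Granting this, take $T$ to be the given size-$s$ tree and pull the maximum out of the sum: $\Var(f)\le 2\big(\max_i \Inf_i(f)\big)\cdot\sum_i\delta_i(T)$. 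It then remains to bound $\sum_i\delta_i(T)$ by $\log_2 s$, which is an entropy computation: after discarding redundant queries we may assume no variable is queried twice on a root-to-leaf path, so $\sum_i\delta_i(T)=\mathbb{E}_{\bx}\big[\#\{\text{variables queried on }\bx\text{'s path}\}\big]=\mathbb{E}_{\bx}[\,\depth\text{ of the leaf reached by }\bx\,]$; since a uniform $\bx$ reaches a depth-$d$ leaf with probability exactly $2^{-d}$ and there are at most $s$ leaves, this expectation is the Shannon entropy of a distribution on $\le s$ outcomes, hence at most $\log_2 s$. Combining, $\Var(f)\le 2\log_2 s\cdot\max_i\Inf_i(f)$, which rearranges to the claimed bound (the constant $2$ in the statement comfortably absorbs this normalization and any base-of-logarithm convention).

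The whole difficulty is thus in the weighted inequality $\Var(f)\le\sum_i\delta_i(T)\cdot\Inf^{\mathrm{edge}}_i(f)$, which I would invoke from \cite{OSSS05}; this is the main obstacle. It is tempting---but, as is, insufficient---to prove it by induction on $T$. If the root queries $x_j$ and its two subtrees compute $g_-:=f_{x_j=-1}$ and $g_+:=f_{x_j=+1}$, the law of total variance gives
\[
\Var(f)=\tfrac12\Var(g_-)+\tfrac12\Var(g_+)+\tfrac14\big(\mathbb{E}[g_-]-\mathbb{E}[g_+]\big)^2,
\]
and one hopes to apply the inductive hypothesis to the two subtrees and charge the last term using $\tfrac14(\mathbb{E}[g_-]-\mathbb{E}[g_+])^2\le\Inf^{\mathrm{edge}}_j(f)=\delta_j(T)\,\Inf^{\mathrm{edge}}_j(f)$. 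This does not close the induction: the bound on $\Var(f)$ obtained this way can be strictly larger than the target $\sum_i\delta_i(T)\,\Inf^{\mathrm{edge}}_i(f)$, because the inductive hypothesis for a subtree can be very far from tight and the slack is never recovered. Concretely, if $g_+$ is constant and $g_-$ is a parity on $k\ge 2$ variables, then $\Var(g_-)=1$ while $\sum_i\delta_i(T_-)\,\Inf^{\mathrm{edge}}_i(g_-)=k$, and feeding the latter into the recursion overshoots. A genuinely global accounting is required here, and that is exactly what OSSS supply.

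Finally, two remarks on why the easy, self-contained bounds are not enough. A hybrid/coupling argument---run $T$ on $\bx$, overwrite an independent uniform $\by$ with $\bx$'s values one queried coordinate at a time down $\bx$'s root-to-leaf path, and telescope $f(\bx)-f(\by)$ into pivotal steps---yields only the unweighted bound $\Var(f)\le\sum_i\Inf^{\mathrm{edge}}_i(f)$, which gives the useless $\max_i\Inf_i(f)\ge\Var(f)/(2n)$. And the easy observation that $\sum_i\Inf^{\mathrm{edge}}_i(f)\le\log_2 s$ for a size-$s$ tree (each $\Inf^{\mathrm{edge}}_i(f)\le\delta_i(T)$) only reproves the trivial $\Var(f)\le\log_2 s$. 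It is exactly the per-coordinate weights $\delta_i(T)$ in the OSSS inequality---and the fact that $\sum_i\delta_i(T)$ is logarithmic in the size rather than $\Theta(n)$---that makes the theorem go through.
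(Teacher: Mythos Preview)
Your derivation is correct: reducing to the weighted OSSS inequality $\Var(f)\le\sum_i\delta_i(T)\,\Inf^{\mathrm{edge}}_i(f)$, pulling out the maximum, and bounding $\sum_i\delta_i(T)=\Delta(T)\le\log s$ by the entropy argument yields exactly the stated bound. You are also right that the hard content lies entirely in the weighted inequality, which you cite from \cite{OSSS05} rather than reprove.

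The paper takes a genuinely different route. Instead of citing \cite{OSSS05}, it shows that its own pruning lemma (\Cref{thm:pruning general}) implies the Jain--Zhang strengthening $\max_i\Inf_i(f)\ge(\bias(f)-\dist(T,f))/\Delta(T)$ (\Cref{thm:JZ OSSS}), and then specializes with $T=f$. The argument is short: set $\tau=\max_i\Inf_i(f)$; then no variable has influence strictly greater than $\tau$, so the only everywhere-$\tau$-influential trees are constants, forcing $\Prune(f,T,\tau)$ to be a constant and hence $\dist(f,\Prune(f,T,\tau))\ge\bias(f)$; the third guarantee of the pruning lemma then gives $\bias(f)\le\dist(T,f)+\Delta(T)\cdot\tau$, which rearranges to the claim. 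What your route buys is directness---once the weighted OSSS inequality is granted, the remaining steps are one-liners, and this is indeed the standard derivation. What the paper's route buys is a fully self-contained proof (the pruning lemma is established by an elementary induction on the depth of $T$) together with the conceptual point it wants to make, namely that the pruning lemma strictly generalizes OSSS; in particular, by tracking $\dist(\cdot,f)$ rather than $\Var(\cdot)$ through the induction, it sidesteps exactly the global-accounting obstacle you correctly identified in the na\"ive inductive attempt.
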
 

In words, the OSSS inequality says that every small-size decision tree (that is not too biased) has an influential variable.  

Our new structural lemma shows that every decision tree can be ``pruned" so that {\sl every} variable in the resulting tree is influential.  Our notion of pruning is simple and is based on a single atomic procedure: one prunes a decision tree $T$ by iteratively replacing any of its internal nodes by one of the node's subtrees.

\begin{theorem}[Our pruning lemma for the realizable setting] 
    \label{thm:pruning intro}
Let $f$ be computable by a size-$s$ decision tree $T$ and $\tau > 0$.  There is a pruning $T^\star$ of $T$ satisfying: 
\begin{itemize} 
\item[$\circ$] $\ds\Prx_{\mathrm{uniform}~\bx}[f(\bx)\ne T^\star(\bx)]\le \tau\log s$; 
\item[$\circ$] For every node $v$ of $T^\star$, writing $i(v)$ to denote the variable queried at $v$, we have that 
\begin{equation} \Inf_{i(v)}(f_{v}) \ge \tau, \label{eq:influence-cutoff} 
\end{equation} 
where $f_v$ denotes the restriction of $f$ by the root-to-$v$ path in $T^\star$. 
\end{itemize} 
\end{theorem}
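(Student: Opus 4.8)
The plan is to construct $T^\star$ by a single top-down pass through $T$ that either keeps or ``collapses'' each node it reaches, and to control the error of the resulting pruning by averaging over the collapse choices together with the Kraft-type inequality $\mathbb{E}_{\bx}[\depth_T(\bx)] \le \log s$ that already drives \Cref{thm:OSSS}.

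Concretely, I would define a recursive procedure $\Prune(S,\rho)$ on a subtree $S$ of $T$ and a restriction $\rho$. If $S$ is a leaf, return it. Otherwise, writing $x_j$ for the variable at the root of $S$: if $\Inf_j(f|_\rho) \ge \tau$, \emph{keep} $x_j$ and return the tree rooted at $x_j$ with children $\Prune(S_{-1}, \rho \cup \{x_j = -1\})$ and $\Prune(S_{1}, \rho \cup \{x_j = 1\})$; if $\Inf_j(f|_\rho) < \tau$, \emph{collapse} by choosing one child $b$ and returning $\Prune(S_b, \rho \cup \{x_j = b\})$. Run $\Prune(T, \emptyset)$. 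Each collapse deletes at least one leaf, so the recursion halts and $T^\star$ is a pruning of $T$; and every node of $T^\star$ was a kept node, so it satisfies the influence cutoff \eqref{eq:influence-cutoff} relative to the restriction $\rho$ recorded when it was created.

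For the error: since $\rho$ always records the full root-to-$S$ path of $T$, the subtree $S$ computes $f|_\rho$ exactly, so keep steps and leaves introduce no new error, while a collapse of $x_j$ under $\rho$ costs exactly $\Pr_{\bx \models \rho}[\,f|_\rho \ne f|_{\rho \cup \{x_j = b\}}\,] = \Inf_j(f|_\rho) < \tau$ within probability mass $\Pr[\bx \models \rho]$. Choosing each collapse direction uniformly at random and unrolling the recursion gives $\mathbb{E}_{\text{choices}} \Pr_{\bx}[f \ne T^\star] \le \tau \cdot \mathbb{E}_{\bm{\pi}}[\,\#\{\text{collapsed nodes on } \bm{\pi}\}\,]$, where $\bm{\pi}$ is a uniformly random root-to-leaf path of $T$; since collapsed nodes are internal, this is at most $\tau \cdot \mathbb{E}_{\bx}[\depth_T(\bx)] \le \tau \log s$, and hence some fixing of the collapse directions achieves the stated error bound.

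The step I expect to be the main obstacle is reconciling this clean accounting with the precise form of \eqref{eq:influence-cutoff}. Recording the \emph{full} $T$-path in $\rho$ is what keeps the leaf labels correct and the error telescoping exact, but then the kept node $v$ is only certified to have $\Inf_{i(v)}(f|_{\rho}) \ge \tau$ for $\rho$ the original root-to-$v$ path of $T$, whereas the lemma demands it for $f_v$, the restriction by the root-to-$v$ path of $T^\star$ --- which omits the constraints inherited from collapsed ancestors. Transferring the bound across this discrepancy is delicate, since restricting an additional variable can only be shown to at most halve an influence, so a naive argument loses a factor of $2$ per collapsed ancestor. Overcoming this --- e.g.\ by choosing the collapse directions greedily rather than at random, so that the variables fixed by collapses do not let the surviving influences decay, or by replacing the collapse-by-collapse accounting with an OSSS-style martingale computation carried out directly on $T^\star$ --- is where the real work of the lemma lies.
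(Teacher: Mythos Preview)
You have correctly diagnosed the obstacle, and the paper's resolution of it is much simpler than the workarounds you contemplate: when you collapse, \emph{do not} extend the restriction $\rho$. That is, in the collapse branch call $\Prune(S_b,\rho)$ rather than $\Prune(S_b,\rho\cup\{x_j=b\})$. Then $\rho$ records exactly the kept nodes, so by construction every kept node $v$ is tested for $\Inf_{i(v)}(f_v)\ge\tau$ with $f_v$ the restriction along the root-to-$v$ path of $T^\star$, and \eqref{eq:influence-cutoff} holds on the nose with no transfer argument at all.

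What has to change is the error accounting, because $S_b$ no longer computes $f|_{\rho}$ exactly once $\rho$ omits the collapsed constraints. The paper handles this by induction on the depth of $T$ with the stronger invariant
\[
\dist\big(\Prune(f,T,\tau),\,f\big)\ \le\ \dist(T,f)+\Delta(T)\cdot\tau,
\]
where $\Delta(T)$ is the average depth. In the keep case this decomposes exactly as you wrote. In the collapse case, choose $b$ to minimize $\dist(\Prune(f,T_b,\tau),f)$ (your random choice plus ``min $\le$ average'' gives the same bound), apply the inductive hypothesis to $(f,T_b)$ to get $\dist(T_b,f)+\Delta(T_b)\tau$, and then use the one-line triangle inequality
\[
\dist(T_b,f)\ \le\ \dist(T_b,f_{x_j=b})+\dist(f,f_{x_j=b})\ =\ \dist(T_b,f_{x_j=b})+\Inf_j(f).
\]
Averaging over $b\in\{\pm1\}$ collapses $\tfrac12\sum_b\dist(T_b,f_{x_j=b})$ back to $\dist(T,f)$ and $\tfrac12\sum_b\Delta(T_b)$ to $\Delta(T)-1$, so the extra $\Inf_j(f)\le\tau$ is absorbed by the ``$-1$'' and the invariant closes. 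Your per-collapse cost of ``exactly $\Inf_j(f|_\rho)$'' is replaced by this triangle-inequality step; nothing deeper is needed, and in particular no control on how influences move under additional restrictions.
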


(We show in the body of this paper that this pruning lemma implies the OSSS inequality.)

For the realizable setting, this lemma is useful because only a small number of variables can satisfy~\Cref{eq:influence-cutoff}. It is well known and easy to show that the {\sl total} influence of a size-$s$ decision tree, the sum of individual variable influences, is upper bounded by $\log s$.  There can therefore be at most $(\log s)/\tau$ many variables with influence at least $\tau$.  Our $\poly(n)\cdot  s^{O(\log\log s)}$ time algorithm for the realizable setting follows quite easily from~\Cref{thm:pruning intro}.

\paragraph{The agnostic setting.}  In the agnostic setting, there is no longer a good bound on the number of variables of $f$ with influence at least $\tau$.  If $f$ is merely {\sl close to} a size-$s$ decision tree, say $0.1$-close, the size of this set can be as large as $\Omega(n)$ as opposed to $(\log s)/\tau$ as in the realizable setting.  

To overcome this, we consider the {\sl smoothing} of $f$ and the {\sl noisy influence} of its variables, and rely on a generalization of our pruning lemma based on these notions.  By choosing an appropriate smoothing/noise parameter $\delta$, we show that: 
\begin{itemize} 
\item[$\circ$] The smoothing $\tilde{f}$  is $(\delta\log s)$-close to $f$; 
\item[$\circ$] There are at most $1/(\tau\delta)$ many variables with noisy influence at least $\tau$ on $\tilde{f}$.  
\end{itemize} 
A straightforward application of these ideas yields an agnostic algorithm that achieves accuracy $O(\opt)+\eps$. A more careful analysis further improves the guarantee to $\opt+\eps$.

The high-level idea of using smoothing and noisy influence to upgrade a non-agnostic algorithm into an agnostic one already appears in prior work on decision tree learning~\cite{BGLT-NeurIPS1}, though the details of our analyses differ.

\section{Preliminaries} 
\label{sec:prelim} 
We use {\bf boldface} (e.g.~$\bx\sim\bn$) to denote random variables, and unless otherwise stated, all probabilities and expectations are with respect to the uniform distribution.  A restriction $\pi$ of a function $f : \bn \to \R$, denoted $f_\pi$, is the subfunction of $f$ that one obtains by fixing a subset of the variables to constants (i.e.~$x_i = b$ for $i\in [n]$ and $b\in \bits$).  We write $|\pi|$ to denote the number of variables fixed by $\pi$.  

The {\sl size} of a tree is its number of leaves, its {\sl depth} is the length of the longest root-to-leaf path, and we define its {\sl average depth} to be the quantity: 
    \[ \Delta(T) \coloneqq \Ex_{\bx\sim\bits^n}[\textnormal{depth of leaf that $\bx$ reaches}] = \sum_{\textnormal{leaves $\ell\in T$}} 2^{-|\ell|}\cdot |\ell|,\]
    where $|\ell|$ denotes the depth of $\ell$ within $T$.  Note that if $T$ is a size-$s$ decision tree, then $\Delta(T)\le \log s$.

\begin{definition}[Influence of variables]
    \label{def:influence}
    For $f:\bits^n \to \bits$ and $i \in [n]$, the {\sl influence of $x_i$ with respect to $f$} is the quantity
    \begin{align}  
        \label{eq:inf definition not equals}
        \Inf_i(f) \coloneqq \Prx_{\bx \sim \bits^n}\left[f(\bx) \neq f(\bx^i)\right],
    \end{align}
    where $\bx^i$ denotes $\bx$ with its $i^{\text{th}}$ coordinate rerandomized (i.e.~flipped with probability $\frac1{2}$).   More generally, for $f: \bits^n \to Y$ where $Y$ is a metric space equipped with a distance function $\rho$, 
    \begin{align}
    \label{eq:inf definition metric}
        \Inf_i(f) \coloneqq \Ex_{\bx \sim \bits^n}\left[\rho(f(\bx), f(\bx^i))\right].
    \end{align}
\end{definition}
\begin{remark}[Metric spaces of interest]  Although the focus of our work is on learning boolean-valued functions $f : \bn\to\bits$, our approach involves reasoning more generally about real-valued functions $\tilde{f} : \bn\to \R$.  Several intermediate results that we establish for real-valued functions hold even more generally for any metric space $Y$ as the codomain (e.g.~our pruning lemma), and in those cases we state and prove them in their most general form. 

Throughout this paper the codomain $Y = \bits$ is by default equipped with the not-equals metric $\rho(x,y) = \Ind[x\ne y]$ (note that in this case \Cref{eq:inf definition not equals,eq:inf definition metric} are equivalent), and the codomain $Y = \R$ is by default equipped with the absolute value metric $\rho(x,y) = |x-y|$.    
\end{remark}

\begin{definition}[Distance between functions]
    For any metric space $Y$ equipped with a distance function $\rho$, we define the distance between two functions $f,g: \bits^n \to Y$ to be
    \begin{align*}
        \dist(f,g)\coloneqq \Ex_{\bx \sim \bn}[\rho(f(\bx), g(\bx))]. 
    \end{align*}
    We say that $f$ is {\sl $\eps$-close to} $g$ if $\dist(f,g)\le \eps$.
\end{definition}

We note that we can express influence in terms of distance.
\begin{fact}
    \label{fact:influence with distance}
    For any metric space $Y$, function $f: \bits \to Y$, and $i \in [n]$,
    \begin{align*}
        \Inf_i(f) = \dist(f, f_{x_{i} = 1}) = \dist(f, f_{x_{i} = -1}).
    \end{align*}
\end{fact}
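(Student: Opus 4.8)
The plan is a short direct computation that handles all three quantities in the same way. Fix $i\in[n]$. For an assignment $z$ to the coordinates in $[n]\setminus\{i\}$, abbreviate $a_z \coloneqq f_{x_i=1}(z)$ and $b_z \coloneqq f_{x_i=-1}(z)$, so that a uniform $\bx\sim\bits^n$ may be generated by drawing $\bm{z}$ uniform over $\bits^{[n]\setminus\{i\}}$ together with an independent uniform bit $\bx_i\in\bits$, and $f(\bx)$ then equals $a_{\bm{z}}$ when $\bx_i=1$ and $b_{\bm{z}}$ when $\bx_i=-1$. Since neither $f_{x_i=1}$ nor $f_{x_i=-1}$ depends on the $i$-th coordinate, we also have $f_{x_i=1}(\bx)=a_{\bm{z}}$ and $f_{x_i=-1}(\bx)=b_{\bm{z}}$ pointwise.

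With this set-up, I would evaluate each expectation by first conditioning on $\bm{z}$ and then averaging over the remaining independent bit(s). For $\dist(f,f_{x_i=1}) = \Ex_{\bm{z},\bx_i}[\rho(f(\bx),a_{\bm{z}})]$, the inner average over $\bx_i$ is $\tfrac12\rho(a_{\bm{z}},a_{\bm{z}})+\tfrac12\rho(b_{\bm{z}},a_{\bm{z}}) = \tfrac12\rho(a_{\bm{z}},b_{\bm{z}})$, using $\rho(a_{\bm{z}},a_{\bm{z}})=0$ and the symmetry of $\rho$; hence $\dist(f,f_{x_i=1}) = \tfrac12\,\Ex_{\bm{z}}[\rho(a_{\bm{z}},b_{\bm{z}})]$. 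The computation for $\dist(f,f_{x_i=-1})$ is identical with the roles of $a_{\bm{z}},b_{\bm{z}}$ interchanged, and again yields $\tfrac12\,\Ex_{\bm{z}}[\rho(a_{\bm{z}},b_{\bm{z}})]$ by symmetry of $\rho$. Finally, writing the rerandomized point as $\bx^i=(\bm{z},\bm{w})$ for an independent uniform bit $\bm{w}$, we have $\Inf_i(f) = \Ex_{\bm{z},\bx_i,\bm{w}}[\rho(f(\bx),f(\bx^i))]$, and conditioning on $\bm{z}$ the inner average over the two independent bits $(\bx_i,\bm{w})$ is $\tfrac14\big(\rho(a_{\bm{z}},a_{\bm{z}})+\rho(a_{\bm{z}},b_{\bm{z}})+\rho(b_{\bm{z}},a_{\bm{z}})+\rho(b_{\bm{z}},b_{\bm{z}})\big) = \tfrac12\rho(a_{\bm{z}},b_{\bm{z}})$, so $\Inf_i(f) = \tfrac12\,\Ex_{\bm{z}}[\rho(a_{\bm{z}},b_{\bm{z}})]$ as well. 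Comparing the three expressions gives the claim.

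There is essentially no obstacle here: the only points that need care are that the rerandomized coordinate of $\bx^i$ agrees with that of $\bx$ with probability exactly $\tfrac12$, so that the ``diagonal'' terms $\rho(a_{\bm{z}},a_{\bm{z}})$ and $\rho(b_{\bm{z}},b_{\bm{z}})$ vanish, and that $\rho$, being a metric, is symmetric, which is what lets one identify $\dist(f,f_{x_i=1})$ with $\dist(f,f_{x_i=-1})$. One could equally well skip the $a_{\bm{z}},b_{\bm{z}}$ abbreviations and argue by conditioning directly on the value of $\bx_i$ in each of the three expectations; the bookkeeping is the same.
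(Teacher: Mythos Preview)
Your proof is correct; the paper itself does not prove this fact at all, merely stating it without argument. Your direct computation---conditioning on the non-$i$ coordinates and averaging over the remaining uniform bit(s), then using $\rho(a,a)=0$ and symmetry of $\rho$---is exactly the natural verification one would supply.
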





\paragraph{Fourier analysis of boolean functions.} 
We will need the very basics of the Fourier analysis of boolean functions; for an in-depth treatment, see~\cite{ODBook}.  Every function $f : \bn \to \R$ can be uniquely expressed as a multilinear polynomial via its {\sl Fourier expansion}: 
\begin{equation*} f(x) = \sum_{S\sse [n]} \hat{f}(S) \prod_{i\in S}x_i, \quad \textstyle \text{where $\hat{f}(S) = \E\big[f(\bx)\prod_{i\in S}\bx_i\big]$}. 
\end{equation*}

\begin{definition}[Smoothed version of a function]
\label{def:smooth f}
For a function $f: \bn \to \bits$ and noise rate $\delta \in [0, 1]$, the {\sl $\delta$-smoothed version of $f$} is the function $f_\delta: \bits^n \to [-1, 1]$  defined as
\[
    f_\delta(x) \coloneqq \Ex_{\tilde\bx \sim_{\delta} x}[f(\tilde \bx)],
\]
where $\tilde\bx \sim_{\delta} x$ denotes drawing $\tilde \bx$ such that each coordinate $\tilde\bx_i$ is set to $x_i$ with probability $1 - \delta$, and rerandomized with probability $\delta$. Equivalently, each bit of $x$ gets flipped in $\tilde\bx$ with probability $\frac{\delta}{2}$ independently.
\end{definition} 
We remark that $f_\delta$ is sometimes also denoted $\textnormal{T}_{1-\delta}f$, with $\textnormal{T}_{1-\delta}$ being called the noise operator with parameter $1-\delta$.
\section{Our pruning lemma}  
\label{sec:prune} 
In this section we prove our key new structural result, the decision tree pruning lemma.  The actual result that we establish,~\Cref{thm:pruning general}, generalizes the pruning lemma as stated in the introduction (\Cref{thm:pruning intro}) in two ways: 
\begin{enumerate}
\item It holds for functions mapping into an arbitrary metric space rather than just boolean-valued functions; 
\item The decision tree $T$ need not compute $f$.   
\end{enumerate} 
Both aspects will be needed for the application to agnostic learning. 


\begin{definition}[Everywhere $\tau$-influential]
    \label{def:local influence}
    For any function $f: \bits^n \to Y$, threshold $\tau > 0$, and decision tree $T: \bits^n \to Y$, we say that $T$ is {\sl everywhere $\tau$-influential with respect to $f$} if, for every internal node $v$ of $T$, writing $i(v)$ to denote the variable queried at $v$, we have 
    \begin{align*}
        \Inf_{i(v)}(f_v) \ge \tau,
    \end{align*}
    where $f_v$ denotes the restriction of $f$ by the root-to-$v$ path in $T$.
\end{definition}

The proof of our pruning lemma is constructive---we give an efficient algorithm (\Cref{fig:Prune}) showing how to prune $T$ so that the resulting tree is everywhere $\tau$-influential with respect to~$f$---though in our applications to learning we do not need it to be constructive.

\begin{figure}[h]
  \captionsetup{width=.9\linewidth}
\begin{tcolorbox}[colback = white,arc=1mm, boxrule=0.25mm]
\vspace{3pt} 

$\Prune(f, T, \tau)$:

\begin{itemize}[align=left]
    \item[\textbf{Input:}] Query access to a function $f: \bits^n \to Y$, a decision tree $T: \bits^n \to Y$, and a threshold $\tau > 0$.
    \item[\textbf{Output:}] A decision tree that is everywhere $\tau$-influential w.r.t. $f$.
\end{itemize}
\begin{enumerate}
    \item If $T$ has depth $0$, return $T$.
    \item Let $x_i$ be the variable queried at root of $T$ and $T_{-1}$ and $T_1$ be its left and right subtree respectively.
    \item If $\Inf_i(f) > \tau$, return the tree that queries $x_i$ as its root and has $\Prune(f_{x_i = -1}, T_{-1}, \tau)$ and $\Prune(f_{x_i = 1}, T_1, \tau)$ as its left and right subtree respectively.
    \item If $\Inf_i(f) \leq \tau$, return whichever of $\Prune(f, T_{-1}, \tau)$ or $\Prune(f, T_1, \tau)$ have less distance w.r.t. $f$.
\end{enumerate}

\end{tcolorbox}
\caption{A procedure for pruning a tree $T$ to ensure it is everywhere $\tau$-influential with respect to a function $f$.}
\label{fig:Prune}
\end{figure}

The remainder of this subsection will be devoted to proving the following generalization of \Cref{thm:pruning intro}.
\begin{theorem}[Properties of $\Prune$]
    \label{thm:pruning general}
    For any metric space $Y$, function $f: \bits \to Y$, decision tree $T: \bits \to Y$, and threshold $\tau > 0$, let $T^\star = \Prune(f, T, \tau)$. Then,
    \begin{enumerate}
        \item {\sl Size and depth do not increase:} The size and depth of $T^\star$ are at most the size and depth of~$T$.
        \item {\sl Everywhere $\tau$-influential:} $T^\star$ is everywhere $\tau$-influential with respect to $f$.
        \item {\sl Small increase in distance:} For $\Delta(T)$ the average depth of $T$, 
        \begin{align*}
            \dist(T^\star, f) \leq \dist(T, f) + \Delta(T) \cdot \tau.
        \end{align*}
    \end{enumerate}
\end{theorem}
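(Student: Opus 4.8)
The plan is to prove all three items simultaneously by induction on the depth of $T$, following the recursive structure of $\Prune$ in \Cref{fig:Prune}. The base case is $\depth(T) = 0$: there $T^\star = T$, there are no internal nodes, and $\Delta(T) = 0$, so all three conclusions hold trivially. For the inductive step I will write $x_i$ for the variable at the root of $T$ and $T_{-1}, T_1$ for its two subtrees, and I will repeatedly use the two identities obtained by conditioning on the value of $\bx_i$, namely $\dist(T,f) = \tfrac12\dist(T_{-1}, f_{x_i=-1}) + \tfrac12\dist(T_1, f_{x_i=1})$ and $\Delta(T) = 1 + \tfrac12\Delta(T_{-1}) + \tfrac12\Delta(T_1)$.

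Items 1 and 2 are routine. For item 1: in the ``keep the root'' case (step 3) the size/depth of $T^\star$ equals the sum/one-plus-maximum of the same quantities for the recursively pruned subtrees, which by the inductive hypothesis are at most those of $T_{-1}, T_1$; in the ``drop the root'' case (step 4), $T^\star$ is a pruning of a single subtree, whose size and depth are already strictly smaller than those of $T$. For item 2: in step 3 the root of $T^\star$ queries $x_i$ and the test $\Inf_i(f) > \tau$ is exactly the required inequality at the root (the root-to-root restriction is empty, so $f_v = f$), and every deeper node of $T^\star$ sits inside a recursively pruned subtree, where the inductive hypothesis applied to $f_{x_i=\pm 1}$ — together with the fact that the root-to-$v$ restriction in $T^\star$ is the composition of $x_i = \pm1$ with the corresponding restriction inside the subtree — gives the claim; in step 4 we simply invoke the inductive hypothesis on $\Prune(f, T_b, \tau)$, noting that $f$ itself (unrestricted) is what gets passed down.

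Item 3 is the crux, and it splits along the same two cases. In step 3, applying the inductive hypothesis to each recursively pruned subtree and then the distance identity above yields
\begin{align*}
\dist(T^\star, f) &= \tfrac12\dist\big(\Prune(f_{x_i=-1},T_{-1},\tau),\, f_{x_i=-1}\big) + \tfrac12\dist\big(\Prune(f_{x_i=1},T_1,\tau),\, f_{x_i=1}\big) \\
&\le \dist(T,f) + \tau\cdot\tfrac12\big(\Delta(T_{-1})+\Delta(T_1)\big) = \dist(T,f) + \tau\big(\Delta(T)-1\big) \le \dist(T,f) + \tau\,\Delta(T).
\end{align*}

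The hard part will be step 4, where $\Inf_i(f) \le \tau$ and we delete the root, so we must charge the resulting ``error at the root'' to the $\tau$-budget. The key observation is that replacing $T$ by either subtree $T_b$ costs at most $\Inf_i(f)$ in distance to $f$: by the triangle inequality in $Y$ and \Cref{fact:influence with distance}, $\dist(T_b, f) \le \dist(T_b, f_{x_i=b}) + \dist(f_{x_i=b}, f) = \dist(T_b, f_{x_i=b}) + \Inf_i(f)$, and averaging over $b \in \bits$ with the distance identity gives $\tfrac12\dist(T_{-1},f) + \tfrac12\dist(T_1,f) \le \dist(T,f) + \Inf_i(f) \le \dist(T,f) + \tau$. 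Since $\Prune$ returns whichever of $\Prune(f,T_{-1},\tau), \Prune(f,T_1,\tau)$ is closer to $f$, its distance is at most the \emph{average} of the two per-subtree bounds coming from the inductive hypothesis, so
\begin{align*}
\dist(T^\star, f) &\le \tfrac12\big(\dist(T_{-1},f) + \dist(T_1,f)\big) + \tau\cdot\tfrac12\big(\Delta(T_{-1})+\Delta(T_1)\big) \\
&\le \dist(T,f) + \tau + \tau\big(\Delta(T)-1\big) = \dist(T,f) + \tau\,\Delta(T),
\end{align*}
which closes the induction. The only real subtlety to get right is the bookkeeping in step 4: the recursion prunes $T_b$ against the \emph{unrestricted} $f$ (not $f_{x_i=b}$), and it is precisely the hypothesis $\Inf_i(f) \le \tau$ that makes the triangle-inequality slack affordable; everything else is the two conditioning identities plus the inductive hypothesis. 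This also recovers \Cref{thm:pruning intro} by taking $Y = \bits$, $T$ a size-$s$ tree computing $f$ (so $\dist(T,f)=0$), and using $\Delta(T) \le \log s$.
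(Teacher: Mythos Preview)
Your proposal is correct and follows essentially the same approach as the paper: all three items are proved by induction on the depth of $T$, splitting on whether $\Inf_i(f) > \tau$ or $\le \tau$, and the key step-4 argument uses $\min \le \text{average}$, the inductive hypothesis applied to $\Prune(f,T_b,\tau)$ (with the unrestricted $f$), the triangle inequality $\dist(T_b,f) \le \dist(T_b,f_{x_i=b}) + \Inf_i(f)$ via \Cref{fact:influence with distance}, and the two conditioning identities for $\dist$ and $\Delta$. The only cosmetic difference is that the paper proves the three guarantees in separate inductions whereas you bundle them into one, and the paper interleaves the step-4 inequalities in a slightly different order; the content is identical.
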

\Cref{thm:pruning intro} is a special case of \Cref{thm:pruning general} where $Y = \bits$ with the not-equals metric and $f \equiv T$. (Recall also that $\Delta(T) \le \log s$.) We prove each guarantee of \Cref{thm:pruning general} separately.

\begin{proof}[Proof of the first guarantee of~\Cref{thm:pruning general}]
    By induction on the depth of $T$. If $T$ has depth $0$, then the size and depth of $\Prune(f, T, \tau)$ are the same as $T$. For $x_i$ the variable queried at root of $T$ and $T_{-1},T_1$ its left and right subtrees respectively, if $\Inf_i(f) > \tau$,
    \begin{align*}
        \size(\Prune(f, T, \tau)) &= \size(\Prune(f_{x_i = {-1}}, T_{-1}, \tau)) + \size(\Prune(f_{x_i = 1}, T_1, \tau)) \\
        &\leq \size(T_{-1}) + \size(T_1) \\
        &= \size(T).
    \end{align*}
    where the second step is the inductive hypothesis. Similarly, for depth
    \begin{align*}
        \depth(\Prune(f, T, \tau)) &= 1 + \max\big(\depth(\Prune(f_{x_i = {-1}}, T_{-1}, \tau)), \depth(\Prune(f_{x_i = 1}, T_1, \tau))\big) \\
        &\leq 1 + \max\big(\depth( T_{-1},), \depth( T_1)\big) \\
        &= \depth(T).
    \end{align*}
    Finally, if $\Inf_i(f) \leq \tau$, then $\Prune(f, T, \tau)$ is equal to either $\Prune(f, T_{-1}, \tau)$ or $\Prune(f, T_1, \tau)$. Since $T_{-1}$ and $T_1$ each have size and depth less than those of $T$, the desired result holds by the inductive hypothesis.
\end{proof}

\begin{proof}[Proof of the second guarantee of~\Cref{thm:pruning general}]
    By induction on the depth of $T$. If $T$ has depth $0$, then it has no internal nodes, so vacuously is everywhere $\tau$-influential. Otherwise, let $x_i$ be the variable queried at root of $T$ and $T_{-1},T_1$ be its left and right subtrees respectively. If $\Inf_i(f) \leq \tau$, then $\Prune(f, T, \tau)$ is either $\Prune(f, T_{-1}, \tau)$ or $\Prune(f, T_1, \tau)$, which is everywhere $\tau$-influential w.r.t. $f$ by the inductive hypothesis.
    
    If we fell in neither of the above two cases, we have $\Inf_i > \tau$. Let $v$ be some internal node of $\Prune(f, T_{-1}, \tau)$. If $v$ is the root of $\Prune(f, T_{-1}, \tau)$, then
    \begin{align*}
        \Inf_{i(v)}(f_v) = \Inf_i(f) > \tau.
    \end{align*}
    Otherwise, let $\alpha$ be the restriction corresponding to the root-to-$v$ path. Since $x_i$ is the root of $\Prune(f, T_{-1}, \tau)$, we have that $\alpha$ must fix $x_i = b$ for $b \in \bits$ and $v$ is an internal node for $\Prune(f_{x_i = b}, T_b, \tau)$. Applying the inductive hypothesis to $\Prune(f_{x_i = b}, T_b, \tau)$, we have that $\Inf_{i(v)}(f_v) > \tau$.
\end{proof}

Before we prove the third and final guarantee of \Cref{thm:pruning general}, we state two easy facts about the subtrees of a decision tree.

\begin{fact}[Subtrees of a tree]
    Let $T: \bits^n \to Y$ be some decision tree and $T_{-1}, T_{1}$ be its left and right subtrees respectively. Then,
    \begin{align}
        \label{eq:avg depth splits}
        \lfrac{1}{2}\cdot (\Delta(T_{-1}) + \Delta(T_{1})) = \Delta(T) - 1.
    \end{align}
    Furthermore, any function $f:\bits^n \to Y$ and $x_i$ being the root of $T$,
    \begin{align}
        \label{eq:error splits}
        \lfrac{1}{2}\cdot (\dist(T_{-1}, f_{x_i = -1}) + \dist(T_{1}, f_{x_i = 1}) ) = \dist(T, f).
    \end{align}
\end{fact}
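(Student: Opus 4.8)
The plan is to prove both identities by conditioning on the value of the root variable $x_i$ of $T$ and invoking the law of total expectation, after recording how leaf depths shift when the root is deleted.

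First, for \Cref{eq:avg depth splits}, I would work from the probabilistic form of average depth, $\Delta(T) = \Ex_{\bx}[\text{depth of the leaf that }\bx\text{ reaches in }T]$. Since a uniform $\bx\sim\bn$ has $\bx_i = -1$ and $\bx_i = 1$ each with probability $\tfrac12$, and conditioned on $\bx_i = b$ the remaining coordinates of $\bx$ are still uniform while $\bx$ descends into the subtree $T_b$, the leaf reached in $T$ has depth exactly $1$ more than the leaf reached by the restricted input in $T_b$. Taking expectations, $\Delta(T) = \tfrac12\big(1 + \Delta(T_{-1})\big) + \tfrac12\big(1 + \Delta(T_{1})\big) = 1 + \tfrac12\big(\Delta(T_{-1}) + \Delta(T_{1})\big)$, which rearranges to the claim. (Alternatively, one can argue directly from the leaf-sum formula $\Delta(T) = \sum_{\ell} 2^{-|\ell|}|\ell|$: each leaf of $T$ is a leaf of exactly one of $T_{-1}, T_1$, at depth one smaller there, and $\sum_\ell 2^{-|\ell|} = 1$; this gives the same computation.)

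Second, for \Cref{eq:error splits}, I would unwind the definition $\dist(T,f) = \Ex_{\bx}[\rho(T(\bx), f(\bx))]$ and again condition on $\bx_i = b$. On the event $\bx_i = b$, the tree $T$ computes the same value as its subtree $T_b$ evaluated on the remaining coordinates, and $f$ equals its restriction $f_{x_i = b}$ on those coordinates, which are uniformly distributed; hence $\Ex_{\bx}\big[\rho(T(\bx), f(\bx)) \,\big|\, \bx_i = b\big] = \dist(T_b, f_{x_i = b})$. Combining the two cases $b\in\bits$ with weight $\tfrac12$ each by the law of total expectation yields the stated identity.

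I do not expect any genuine obstacle: both parts fall out of unwinding the definitions and conditioning on the root variable. The only place needing a little care is the depth bookkeeping in the first identity — every leaf lies one level deeper in $T$ than in the subtree containing it — together with the observation that restricting $f$ by $x_i = b$ commutes with passing to the subtree $T_b$, so that conditioning on $\bx_i$ loses no information about either $T$ or $f$.
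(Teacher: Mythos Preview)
Your argument is correct. The paper does not actually supply a proof of this fact; it is stated as one of ``two easy facts about the subtrees of a decision tree'' and left unproved. Your conditioning-on-the-root derivation is exactly the intended justification and there is nothing to add.
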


\begin{proof}[Proof of the third guarantee of~\Cref{thm:pruning general}]
    By induction on the depth of $T$. If $T$ has depth $0$ then the claim easily holds with equality.  Otherwise, let $x_i$ be the variable queried at root of $T$ and $T_{-1},T_1$ be its left and right subtrees respectively. We note that the depth of $T_{-1}$ and $T_1$ are strictly less than the depth of $T$, so we can apply our inductive hypothesis to them. We consider two cases.
    
    \pparagraph{Case 1:} $\Inf_i(f) > \tau$. 
    \begin{align*}
        \dist(\Prune(f, T, \tau), f) &= \frac{1}{2}\cdot \left(\sum_{b \in \bits} \dist(\Prune(f_{x_i = b}, T_{b}, \tau), f_{x_i = b}) \right) \tag{\Cref{eq:error splits}} \\
         &\leq \frac{1}{2}\cdot \left(\sum_{b \in \bits} \dist(T_b, f_{x_i = b}) + \Delta(T_{b}) \cdot \tau \right) \tag{Inductive hypothesis} \\
         &= \dist(T, f) + (\Delta(T) - 1) \cdot \tau \tag{\Cref{eq:avg depth splits,eq:error splits}} \\
         &\leq \dist(T, f) + \Delta(T) \cdot \tau.
    \end{align*}
    
    \pparagraph{Case 2:} $\Inf_i(f) \leq \tau$. 
    \begin{align*}
        \dist(\Prune(f, T, \tau), f) &= \min_{b \in \bits}\left\{\dist(\Prune(f, T_{b}, \tau), f)\right\}\\
        &\leq \frac{1}{2} \cdot \left( \sum_{b \in \bits}\dist(\Prune(f, T_{b}, \tau), f)\right)\tag{min $\leq$ average}\\
        &\leq \frac{1}{2} \cdot \left(  \sum_{b \in \bits}\dist(T_b, f) + \Delta(T_{b}) \cdot \tau\right) \tag{Inductive hypothesis} \\
        &\leq \frac{1}{2} \cdot \left(  \sum_{b \in \bits}\dist(T_b, f_{x_i = b}) + \dist(f, f_{x_i = b}) + \Delta(T_{b}) \cdot \tau\right) \tag{Triangle inequality} \\
        &= \frac{1}{2} \cdot \left(  \sum_{b \in \bits}\dist(T_b, f_{x_i = b}) + \Inf_i(f) + \Delta(T_{b}) \cdot \tau\right) \tag{\Cref{fact:influence with distance}} \\
        &= \dist(T, f) + (\Delta(T) - 1) \cdot \tau + \Inf_i(f) \tag{\Cref{eq:avg depth splits,eq:error splits}}  \\
        & \leq \dist(T,f) + \Delta(T) \cdot \tau. \tag{$\Inf_i(f) \leq \tau$}
    \end{align*}
    This completes the proof.
\end{proof}



\subsection{Our pruning lemma implies the OSSS inequality}

Several variants of the OSSS inequality (\Cref{thm:OSSS}) have been proved over the years~\cite{Lee10,JZ11,ODBook,DRT19}.  
We show that our pruning lemma implies the following strengthening of the  OSSS inequality:
\begin{theorem}[\cite{JZ11}]
    \label{thm:JZ OSSS}
    For any function $f: \bits^n \to \bits$ and decision tree $T: \bits^n \to \bits$,
    \begin{align*}
        \max_{i \in [n]} \big\{\Inf_i(f)\big\} \geq \frac{\bias(f) - \dist(T,f)}{\Delta(T)}
    \end{align*}
    where the bias of $f$ is defined as 
    \begin{align*}
        \bias(f) = \min_{b \in \{\pm 1\}} \left\{\Prx_{\bx \sim \bits^n}[f(\bx) \neq b]\right\}.
    \end{align*}
\end{theorem}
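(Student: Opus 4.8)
The plan is to apply the pruning lemma, \Cref{thm:pruning general}, with the threshold set to the \emph{largest} influence, $\tau \coloneqq \max_{i \in [n]}\{\Inf_i(f)\}$, and to observe that this extremal choice forces $\Prune(f, T, \tau)$ to collapse all the way down to a constant function. First I would dispatch a trivial case: if $\bias(f) \le \dist(T,f)$ then the right-hand side of the claimed inequality is nonpositive, and the inequality holds since every $\Inf_i(f) \ge 0$; so from now on assume $\bias(f) > \dist(T, f)$. This assumption implies in particular that $T$ is not constant (a constant $T \equiv b$ would have $\dist(T,f) = \Prx_{\bx}[f(\bx) \ne b] \ge \bias(f)$), hence $\Delta(T) \ge 1$, so in particular $\Delta(T) > 0$ and dividing by it is legitimate.

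The heart of the argument is the claim that $T^\star \coloneqq \Prune(f, T, \tau)$ has depth $0$. I would prove this by induction on the depth of $T$, keeping the function argument fixed to $f$ (which satisfies $\max_i\{\Inf_i(f)\} = \tau$). The base case (depth-$0$ trees) is immediate from Step 1 of \Cref{fig:Prune}. For the inductive step, let $x_i$ be the root variable of $T$. Since $\Inf_i(f) \le \max_j\{\Inf_j(f)\} = \tau$, the test in Step 3 ($\Inf_i(f) > \tau$) fails, so $\Prune(f, T, \tau)$ returns $\Prune(f, T_b, \tau)$ for one of the two subtrees $T_b$ of $T$; crucially, Step 4 leaves the function argument unchanged --- it is still $f$, not a restriction of it --- and Step 3, which is the only step that would restrict $f$, is never reached. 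Since $T_b$ has strictly smaller depth, the inductive hypothesis gives that $\Prune(f, T_b, \tau)$, and hence $\Prune(f,T,\tau)$, has depth $0$. Thus $T^\star$ is a single leaf of $T$; write $T^\star \equiv b$ for the constant $b \in \bits$ it computes.

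Finally I would combine two bounds on $\dist(T^\star, f)$. On one hand, since $T^\star$ is the constant $b$, $\dist(T^\star, f) = \Prx_{\bx}[f(\bx) \ne b] \ge \min_{b' \in \bits}\{\Prx_{\bx}[f(\bx) \ne b']\} = \bias(f)$. On the other hand, the third guarantee of \Cref{thm:pruning general} gives $\dist(T^\star, f) \le \dist(T, f) + \Delta(T) \cdot \tau$. Chaining the two and rearranging (using $\Delta(T) > 0$) yields $\tau \ge (\bias(f) - \dist(T,f))/\Delta(T)$, which is exactly $\max_i\{\Inf_i(f)\} \ge (\bias(f) - \dist(T,f))/\Delta(T)$.

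I do not expect a substantive obstacle: all the real work is packaged inside \Cref{thm:pruning general}, and what remains is the bookkeeping observation that choosing $\tau$ to be the maximum influence makes the ``keep this node'' branch of $\Prune$ unreachable, so the output is forced to be a leaf. The one minor point of care is the mismatch between the strict inequality $\Inf_i(f) > \tau$ used inside $\Prune$ and the non-strict $\Inf_i(f) \ge \tau$ in \Cref{def:local influence}; taking $\tau$ exactly equal to $\max_i\{\Inf_i(f)\}$ is precisely what makes the strict test fail at every node, which is what the argument needs.
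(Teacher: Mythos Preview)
Your proposal is correct and takes essentially the same approach as the paper: set $\tau = \max_i\{\Inf_i(f)\}$, observe that $\Prune(f,T,\tau)$ must be a constant, lower-bound its distance to $f$ by $\bias(f)$, and combine with the third guarantee of \Cref{thm:pruning general}. Your version is in fact a bit more careful than the paper's --- you explicitly handle the degenerate case $\bias(f)\le\dist(T,f)$ (avoiding any worry about $\Delta(T)=0$) and argue directly via the procedure that Step~3 is never reached, whereas the paper appeals to the ``everywhere $\tau$-influential'' guarantee, which as you noted is stated with a non-strict inequality and so does not by itself rule out a nontrivial $T^\star$.
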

The OSSS inequality follows from \Cref{thm:JZ OSSS} by taking $T = f$, and because $2\cdot \bias(f) \leq \Var(f)$.
We now show that \Cref{thm:JZ OSSS} is a special case of~\Cref{thm:pruning general}:
\begin{proof}[Proof of $\Cref{thm:pruning general} \implies \Cref{thm:JZ OSSS}$]
    Set \violet{$Y = \bits$ and } $\tau= \max_{i \in [n]}\{ \Inf_i(f)\}$. There are no variables with influence  more than $\tau$ on $f$ so the only decision trees that are everywhere $\tau$-influential w.r.t.~$f$ are the trivial ones that make no queries. In other words, $\Prune(T, f, \tau)$ is either the constant $+1$ function or constant $-1$ function. Therefore,
    \begin{align*}
        \dist(f, \Prune(f, T, \tau)) \geq \bias(f).
    \end{align*}
    By the third guarantee of~\Cref{thm:pruning general},
    \begin{align*}
        \bias(f) \leq \dist(f, \Prune(f, T, \tau)) &\leq \dist(T, f) + \Delta(T) \cdot \tau \\
        &= \dist(T, f) + \Delta(T) \cdot \max_{i \in [n]} \big\{\Inf_i(f)\big\}.
    \end{align*}
 Rearranging completes the proof.    
\end{proof}

\section{Learning in the realizable setting} 
\label{sec:realizable} 
We  first present and analyze our algorithm in the simpler realizable setting  where $f$ is exactly a size-$s$ decision tree (i.e., $\opt_s = 0$). 

\begin{theorem}[Special case of \Cref{thm:main}: the realizable setting]
\label{lem: main realizable}
There is an algorithm which, given as input $\eps > 0$, $s\in \N$, and query access to a size-$s$ decision tree $f : \bits^n\to\bits$, runs in time 
\[ \tilde{O}(n^2) \cdot (s/\eps)^{O(\log((\log s)/\eps))} \]
and outputs a size-$s$ decision tree hypothesis $T$ that w.h.p.~satisfies $\dist(T,f) \le  \eps$. 
\end{theorem}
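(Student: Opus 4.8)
\emph{Proof plan.} The plan is to turn the pruning lemma (\Cref{thm:pruning intro}) into an efficient search. Apply it to $f$ itself --- legitimate in the realizable setting, since we may take ``$T$'' to be the size-$s$ tree computing $f$ --- with threshold $\tau \coloneqq c\eps/\log s$ for a small absolute constant $c$. This produces a pruning $T^\star$ of $f$'s tree that (i) has size at most $s$ and depth at most $\depth(f)$, (ii) satisfies $\dist(T^\star,f)\le\tau\log s\le\eps$, and (iii) is everywhere $\tau$-influential with respect to $f$: each node $v$ queries a variable $i(v)$ with $\Inf_{i(v)}(f_v)\ge\tau$, where $f_v$ is $f$ restricted by the root-to-$v$ path of $T^\star$. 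Thus a size-$s$, $\eps$-close hypothesis \emph{exists}; everything below is about finding one efficiently.

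The leverage provided by (iii) is that the branching of the search is tiny. For any node $v$, $f_v$ is itself computable by a decision tree of size at most $s$, so its total influence $\sum_i\Inf_i(f_v)$ is at most $\Delta(f_v)\le\log s$; hence at most $(\log s)/\tau = O(\log^2(s)/\eps)=\polylog(s/\eps)$ variables have influence at least $\tau$ on $f_v$. The algorithm therefore grows a tree top-down by a \emph{memoized} recursion over states (current restriction $\pi$, remaining size budget $s'$): at a state it uses membership queries to $f$ to estimate $\Ex[f_\pi]$ and every $\Inf_i(f_\pi)$ to additive accuracy $\Theta(\tau)$, forms the candidate set $C_\pi$ of variables with estimated influence at least $\tau/2$ (so $|C_\pi|=\polylog(s/\eps)$ and $C_\pi$ contains every genuinely $\tau$-influential variable), and returns whichever of the following has smallest estimated distance to $f_\pi$: the leaf labeled $\mathrm{sign}(\Ex[f_\pi])$; or, for some $i\in C_\pi$ and size split $s_1+s_2\le s'$, the tree querying $x_i$ with the recursively computed subtrees for $(\pi x_i{=}{-}1,s_1)$ and $(\pi x_i{=}1,s_2)$. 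The recursion is cut off --- the state is forced to be a leaf --- once $\pi$ has length $D$ for a budget $D$ fixed below. Since only candidate variables are ever queried, the number of distinct restrictions that arise is at most $(2\polylog(s/\eps))^{D}$, and a union bound over them (and over the $\le s$ budgets) makes all the estimates accurate with high probability while spending at most $\tilde O(n^2)$ queries and time per state; the total is $(2\polylog(s/\eps))^{D}\cdot\tilde O(n^2)$. Correctness: if $\depth(T^\star)\le D$ then $T^\star$ is among the trees the search considers --- every node of $T^\star$ queries a variable lying in the relevant $C_\pi$, and its size splits are legal --- so the output is a size-$\le s$ tree at least as accurate as $T^\star$, hence of distance at most $\eps$.

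The step I expect to be the real obstacle is choosing and justifying the cut-off depth: to get the claimed running time $(s/\eps)^{O(\log\log(s/\eps))}$ one needs $D=O(\log(s/\eps))$ (the state count is $(\polylog(s/\eps))^{O(D)}$), but truncating $T^\star$ at depth $D$ costs $\Pr_{\bx}[\,|\bx\text{'s path in }T^\star|\ge D]$ in accuracy, and the bound immediate from $\Delta(T^\star)\le\log s$ is only the Markov bound $(\log s)/D$, which would force $D=\Omega(\log(s)/\eps)$ and blow up the runtime when $\eps$ is small. Beating this must use property (iii) more strongly than through $\Delta$ alone. Two routes look plausible: (a) a potential-function tail bound --- the quantity $\Delta(\text{target tree restricted by the root-to-}v\text{ path})$ lies in $[0,\log s]$ and, using the standard fact $\Inf_{i(v)}(f_v)\le\Pr[\text{the restricted target's path queries }x_{i(v)}]$, it drops by at least $\tau$ per level of $T^\star$ in expectation over the child, making the path length a stopped process one may hope to concentrate; or (b) interleaving the branching search with a non-branching ``keep splitting on the most influential variable while $\bias(f_v)$ is not yet $O(\eps)$'' (OSSS-style) step, so that the number of \emph{branching} levels --- as opposed to total depth --- is $O(\log(s/\eps))$ while the extra greedy depth is free. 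Either way, the existence of a good size-$s$ tree and the smallness of the candidate sets are the clean inputs, supplied by the pruning lemma and the total-influence bound; the quantitative depth/accuracy trade-off, and the routine sampling-and-union-bound bookkeeping, are where the work lies.
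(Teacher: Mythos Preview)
Your overall approach---memoized top-down search over (restriction, size budget) states, with candidate variables restricted to those of influence at least $\tau$, and correctness certified by the pruning lemma---is exactly the paper's. Where you diverge is in your treatment of the depth cut-off, which you flag as ``the real obstacle'' and propose to attack via a potential-function tail bound or an OSSS-style interleaving. Neither is needed: you have overlooked a one-line counting argument.

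The point is that $T^\star$ is a pruning of a size-$s$ tree, so $T^\star$ itself has at most $s$ leaves. In any binary tree with at most $s$ leaves, the number of nodes at depth exactly $D$ is at most $s$ (distinct depth-$D$ nodes root disjoint subtrees, each containing at least one leaf). Hence
\[
\Pr_{\bx}\bigl[\text{$\bx$'s path in $T^\star$ has length}\ge D\bigr]\ \le\ s\cdot 2^{-D},
\]
which for $D=\log(s/\eps)$ is exactly $\eps$. So truncating $T^\star$ at depth $D=\log(s/\eps)$ costs at most $\eps$ in distance, giving a depth-$D$, size-$\le s$, everywhere $\tau$-influential tree with distance $\le 2\eps$ to $f$; this is the tree your search will match or beat. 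Your Markov bound via $\Delta(T^\star)\le\log s$ is simply the wrong tool here---it ignores the size constraint, which is what actually controls the tail. With this fix, your argument is complete and matches the paper's proof.
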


For clarity, we describe our algorithm, $\BuildDT$ in~\Cref{fig:BuildDT}, under the assumption that variable influences of $f$ and its subfunctions (i.e.~the quantities $\Inf_i(f_\pi)$ for all $i$ and $\pi$) can be computed exactly in unit time.  In actuality one can only obtain high-accuracy estimates of these quantities via random sampling.  When we prove~\Cref{lem: main realizable} we will show how this assumption can be removed via standard arguments. 

\begin{figure}[ht] 
  \captionsetup{width=.9\linewidth}

\begin{tcolorbox}[colback = white,arc=1mm, boxrule=0.25mm]
\vspace{3pt} 

$\BuildDT_{M}(f, \pi, s, d, \tau)$:

\begin{itemize}[align=left]
    \item[\textbf{Input:}] Query access to a function $f: \bits^n \to [-1, 1]$, restriction $\pi$, size parameter $s$, depth parameter $d$, influence parameter $\tau$. \violet{It maintains a map $M : \{ \textnormal{restrictions}\} \times [s] \to \{ \textnormal{decision trees}\}$.}
    \item[\textbf{Output:}]A decision tree $T$ that minimizes $\dist(T, f_\pi)$ among all depth-\violet{$(d-|\pi|)$}, size-$s$, everywhere $\tau$-influential trees.
\end{itemize}
\begin{enumerate}
    \item If $\pi = \varnothing$, initialize $M$ to the empty map.
    \item If $M[\pi, s]$ is nonempty, return $M[\pi, s]$.
    \item If $|\pi| = d$ or $s  = 1$, \violet{return the singleton leaf labeled $\sign(\E[f_\pi])$.} 
    \item Otherwise: 
    \begin{enumerate}
        \item Let $S \subseteq [n]$ be the set of variables $i$ such that $\Inf_i(f_\pi) \ge \tau$.
        \item For each $i \in S$ and $k \in [s-1]$, let $T_{i,k}$ be the tree such that 
            \begin{align*}
                \mathrm{root}(T_{i,k}) &= x_i \\
                \textnormal{left-subtree}(T_{i,k}) &= \BuildDT_{M} (f, \pi \cup \{x_i = -1\}, k, d, \tau) \\
                \textnormal{right-subtree}(T_{i,k}) &= \BuildDT_{M}(f, \pi \cup \{ x_i = 1\}, s-k, d, \tau)
            \end{align*}
        \item Set \violet{$M[\pi,s]$ to be the tree among the $T_{i,k}$'s defined above with minimal distance to $f_\pi$.} 
        \item Return $M[\pi, s]$.
    \end{enumerate}
\end{enumerate}
\end{tcolorbox}

\caption{$\BuildDT$ uses dynamic programming to find the size-$s$, depth-$d$, everywhere $\tau$-influential tree of minimal distance to $f$.}
\label{fig:BuildDT}
\end{figure}

\begin{claim}[Correctness]
\label{claim: BuildDT optimal}
During the execution of $\BuildDT$, for any $f: \bn \to [-1, 1]$, restriction $\pi$, and $d,s \in \N$, if $M[\pi, s]$ is nonempty, it contains a tree $T$ that minimizes $\dist(f_\pi, T)$ among all depth-$(d - |\pi|)$, size-$s$, everywhere $\tau$-influential trees.
\end{claim}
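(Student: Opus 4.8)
The plan is a standard memoized dynamic-programming correctness argument, carried out by induction on the remaining depth budget $D \coloneqq d - |\pi|$ and phrased for the tree returned by $\BuildDT$ (which equals the entry written to $M[\pi,s]$ whenever Step~4 executes). The base cases are the ones triggering Step~3, i.e.\ $|\pi| = d$ (so $D = 0$) or $s = 1$: then every depth-$D$, size-$\le s$ tree is a single leaf, so it suffices to check the leaf label. For a constant $b \in \{\pm 1\}$, since $f_\pi$ is $[-1,1]$-valued, $\dist(b, f_\pi) = \Ex_{\bx}[\,|b - f_\pi(\bx)|\,] = 1 - b\cdot\E[f_\pi]$, which is minimized at $b = \sign(\E[f_\pi])$; and a leaf is vacuously everywhere $\tau$-influential, so it does belong to the family being optimized over.

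For the inductive step we are in Step~4, so $D \ge 1$, $s \ge 2$, and the recursive calls defining the $T_{i,k}$ all have depth budget $D-1$. The crucial structural point — the reason it is lossless for Step~4 to restrict attention to the $T_{i,k}$ with $i\in S$ and $k\in[s-1]$ — is that any everywhere $\tau$-influential tree of size $\ge 2$ must branch at its root on a variable $x_i$ with $\Inf_i(f_\pi)\ge\tau$, i.e.\ with $i\in S$. So fix $T^\star$ optimal among depth-$\le D$, size-$\le s$, everywhere $\tau$-influential trees for $f_\pi$. If $\size(T^\star)\ge 2$, it branches on some $x_i$, $i\in S$, with subtrees $T^\star_{-1},T^\star_1$ of sizes $\ge 1$ summing to $\le s$; set $k^\star \coloneqq \size(T^\star_{-1})\in[s-1]$. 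By Definition~\ref{def:local influence}, each $T^\star_b$ is everywhere $\tau$-influential with respect to $f_{\pi\cup\{x_i=b\}}$ and has depth $\le D-1$, so it is a legal competitor in the corresponding depth-$(D-1)$ recursive call; the inductive hypothesis guarantees that the subtrees $\BuildDT$ actually uses there are at least as good. Averaging the two subtree inequalities via \eqref{eq:error splits} gives $\dist(T_{i,k^\star},f_\pi)\le\dist(T^\star,f_\pi)$, and $(i,k^\star)$ is one of the pairs ranged over in Step~4. If instead $T^\star$ is a single leaf (constant $c$), compare it to $T_{i,1}$ for any $i\in S$: its left subtree is the optimal leaf for $f_{\pi\cup\{x_i=-1\}}$ and its right subtree is an optimal size-$\le(s-1)$ everywhere $\tau$-influential tree for $f_{\pi\cup\{x_i=1\}}$, so both do at least as well against their respective functions as the constant $c$, and \eqref{eq:error splits} again yields $\dist(T_{i,1},f_\pi)\le\dist(c,f_\pi)$. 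Either way the tree chosen in Step~4(c) is at least as good as $T^\star$; and it is itself in the family (its root variable lies in $S$, so has influence $\ge\tau$; its subtrees are everywhere $\tau$-influential by the inductive hypothesis; its size is $\le s$; its depth is $\le D$), hence optimal.

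Two bookkeeping points remain. Memoization: any nonempty $M[\pi,s]$ was written by an earlier Step~4(c), and the family it must be optimal for depends only on the key $(\pi,s)$ and the fixed parameters $d,\tau$, not on time, so its optimality persists — formally, carry "every defined entry of $M$ is optimal for its family" as an invariant through the induction. Degenerate keys with an empty candidate set, most notably $S=\varnothing$ together with $s\ge 2$ (where no everywhere $\tau$-influential tree has size $\ge 2$ and Step~4 writes nothing): the statement is then vacuous, matching its hypothesis that $M[\pi,s]$ be nonempty — equivalently, one may treat the singleton leaf $\sign(\E[f_\pi])$ as an always-present candidate in Step~4(c), which is harmless to the guarantee and makes this case immediate.

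I expect the main obstacle to be pinning down the structural observation exactly — why restricting to $\{T_{i,k}:i\in S,\ k\in[s-1]\}$ loses nothing, namely that every size-$\ge 2$ everywhere $\tau$-influential competitor is forced to branch on a variable in $S$, and that a leaf competitor is dominated by some $T_{i,1}$. Granting that, the rest is the routine DP template: decompose the optimum at its root, invoke the inductive hypothesis on the strictly smaller subproblems, recombine through \eqref{eq:error splits}, and dispatch memoization and empty candidate sets with the usual care.
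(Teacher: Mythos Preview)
Your proposal is correct and follows essentially the same route as the paper's proof: induction on the remaining depth budget $d-|\pi|$, decompose an optimal competitor at its root, apply the inductive hypothesis to the two subcalls, and recombine via the distance split \eqref{eq:error splits}. You are in fact more careful than the paper on edge cases---the leaf-competitor case (reading ``size-$s$'' as ``size $\le s$''), the possibility $S=\varnothing$, and the memoization invariant---none of which the paper's terse argument spells out.
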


\begin{proof}
We proceed by induction on $d - |\pi|$. When $d = |\pi|$, $\BuildDT$ populates $M[\pi, s]$ with the singleton leaf $b\in \bits$ that minimizes $\dist(f_\pi,b)$, which is indeed $\sign(\E[f_\pi])$. 
For the inductive step, note that each $T_{i,k}$ satisfies 
\[ \dist(f_\pi, T_{i,k}) = \lfrac{1}{2}\big(\dist(f_{\pi \cup \{ x_i = -1\}}, M[\pi \cup \{ x_i = -1\}, k]) + \dist(f_{\pi \cup \{ x_i = 1\} }, M[\pi \cup \{ x_i = 1\}, s-k])\big). \]   It  follows from the inductive hypothesis that $T_{i,k}$ minimizes distance among all everywhere $\tau$-influential, depth-$(d - |\pi|)$, size-$s$ trees with $x_i$ as the root, and whose left and right subtrees have sizes $k$ and $s-k$ respectively. Since $M[\pi, s]$ is chosen to minimize distance among all such $T_{i,k}$, its distance is minimal among all size-$s$, depth-$(d - |\pi|)$, everywhere $\tau$-influential trees. 
\end{proof}

\begin{claim}[Runtime]
\label{claim: runtime}
\violet{Let $d,s\in \N$.  Let $f : \bn\to\bits$ be a size-$s$ decision tree, and assume that variable influences of $f$ and its subfunctions can be computed exactly in unit time.} The algorithm $\BuildDT_M(f, \varnothing, s, d, \tau)$ runs in time  $n \cdot s^2\cdot  ((\log s)/\tau)^{O(d)}$.
\end{claim}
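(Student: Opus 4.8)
The plan is to bound the running time as (the number of distinct entries $M[\pi,m]$ that the algorithm ever populates) times (the work performed per entry, not counting the memoized recursive calls). The memoization in Steps~1--2 guarantees that the body of Step~4 is executed at most once for each pair $(\pi,m)$, so this product is a legitimate upper bound; it remains to bound the two factors, and everything hinges on showing that the candidate set $S$ is always small.

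For the first factor: since $f$ is a size-$s$ decision tree, every restriction $f_\pi$ is computed by a decision tree of size at most $s$, and the total influence of such a function is at most $\log s$. Indeed $\Inf_i(f_\pi) \le \Prx_{\bx}[\text{the root-to-leaf path taken by }\bx\text{ queries }x_i]$, because if that path does not query $x_i$ then rerandomizing $\bx_i$ leaves the leaf (hence the value) unchanged; summing over $i$ bounds $\sum_i \Inf_i(f_\pi)$ by the expected number of distinct variables on $\bx$'s path, which is at most the average depth of the tree and hence at most $\log s$. Therefore $|S| \le (\log s)/\tau$ at every node, so a given restriction $\pi$ gives rise to at most $2|S| \le 2(\log s)/\tau$ distinct child restrictions $\pi \cup \{x_i = b\}$ in Step~4(b). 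Since the recursion descends only to depth $d$ (Step~3 halts once $|\pi| = d$), the number of distinct restrictions at ``level'' $j$ is at most $(2(\log s)/\tau)^j$, so summing over $j = 0,\dots,d$ gives at most $((\log s)/\tau)^{O(d)}$ restrictions in total, hence at most $s\cdot((\log s)/\tau)^{O(d)}$ pairs $(\pi,m)$.

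For the second factor: I would augment $M$ so that each entry also stores its distance to the corresponding $f_\pi$ (for a leaf this is the quantity $\dist(f_\pi, \sign(\E[f_\pi]))$, which is unit-time under the idealized model). Then for a fixed $(\pi,m)$ reaching Step~4: computing $S$ costs $O(n)$ influence queries; there are at most $|S|\cdot(m-1) \le s(\log s)/\tau$ trees $T_{i,k}$, and for each one both assembling it from its two already-computed subtrees in $M$ and evaluating $\dist(T_{i,k}, f_\pi)$ via the identity $\dist(f_\pi, T_{i,k}) = \frac12\big(\dist(f_{\pi\cup\{x_i=-1\}}, M[\pi\cup\{x_i=-1\},k]) + \dist(f_{\pi\cup\{x_i=1\}}, M[\pi\cup\{x_i=1\},m-k])\big)$ from the proof of \Cref{claim: BuildDT optimal} take $O(1)$ time; selecting the minimizer costs another $O(s(\log s)/\tau)$. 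So the per-pair work is $O(n + s(\log s)/\tau)$. Multiplying the two factors and simplifying (using $s\cdot(n + s(\log s)/\tau) \le n s^2 \cdot (\log s)/\tau$) yields the claimed $n\cdot s^2 \cdot ((\log s)/\tau)^{O(d)}$.

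The conceptual heart of the argument is the bound $|S| \le (\log s)/\tau$, without which the naive count gives $n^{O(d)}$ restrictions rather than $((\log s)/\tau)^{O(d)}$; everything else is routine. The one point I would make explicit is the augmentation of $M$ with stored distances, since a literal reading of the pseudocode might recompute $\dist(T_{i,k}, f_\pi)$ from scratch — this stays within the same idealized unit-cost model already in force, and it is precisely what gets replaced by sampling-based estimates when that idealization is removed in the proof of \Cref{lem: main realizable}.
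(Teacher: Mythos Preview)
Your proposal is correct and follows essentially the same approach as the paper: bound $|S|\le (\log s)/\tau$ via the total-influence bound for size-$s$ decision trees, count at most $s\cdot((\log s)/\tau)^{O(d)}$ memoized $(\pi,m)$ pairs, and charge $O(n + s\cdot|S|)$ work to each. Your explicit remark about augmenting $M$ to store distances (so that $\dist(T_{i,k},f_\pi)$ is computed in $O(1)$ from the two cached subtree distances) is a detail the paper leaves implicit, but it is exactly the right way to justify the per-entry bound.
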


\begin{proof}
For all $\pi$, the size of the set $S$ defined on Step 4(a) is at most 
\begin{equation} |S| \le  \frac1{\tau} \sum_{i=1}^n \Inf_i(f_\pi) \le \frac{\log s}{\tau}, \label{eq:not too many influential} \end{equation}  
where the second inequality uses the fact that for any size-$s$ decision tree $T : \bn\to\bits$, 
\[ \sum_{i=1}^n \Inf_i(T) \le \sum_{i=1}^n \Pr[\text{$T$ queries $\bx_i$}] = \Delta(T) \le \log s.  \] 
Since $\BuildDT$ terminates once $|\pi| = d$ (Step 3), and a restriction $\pi$ is extended by $\{ x_i = b\}$ for some $b\in \bits$ only if $\Inf_i(f_\pi) \ge \tau$ (Step 4), the number of different restrictions that can be constructed  throughout the execution of the algorithm is at most 
\[ \sum_{k=1}^d \left(\frac{\log s}{\tau}\right)^k = \left(\frac{\log s}{\tau}\right)^{O(d)}. \] 
Since $\BuildDT$ returns at Step 2 if $M[\pi,s]$ is nonempty, this ensures that Step 4, the recursive part of $\BuildDT$, is reached at most once for each restriction $\pi$ and size $s$.  The total number of recursive calls is therefore upper bounded by 
\begin{equation}  s \cdot \left(\frac{\log s}{\tau}\right)^{O(d)}.  \label{eq:num recursive calls}
\end{equation} 
Outside of the recursive calls, the runtime of $\BuildDT$ is 
\begin{equation} O(n + s\cdot |S|) \le O(n) + s \cdot \left(\frac{\log s}{\tau}\right). \label{eq:runtime mod recursive calls}
\end{equation} 
The factor of $n$ comes from computing and comparing influences of variables (Line 4(a)), and the factor of $s\cdot |S|$ comes from Line 4(b), the number of different $(\textnormal{candidate root},\textnormal{size split})$ pairs.  The overall runtime is therefore at most the product of the bounds in~\Cref{eq:num recursive calls,eq:runtime mod recursive calls}, and the proof is complete. 
\end{proof} 

\begin{proof}[Proof of \Cref{lem: main realizable}]
Let $d\coloneqq \log(s/\eps)$ and $\tau \coloneqq \eps/\log s$.  We first establish correctness: we claim that $\BuildDT_M(f,\varnothing,s,d,\tau)$  returns a size-$s$ tree $T$ satisfying $\dist(T,f) \le 2\eps$.  Since $\Delta(f)\le \log s$, our pruning lemma,~\Cref{thm:pruning general}, tells us that there is a pruning $T^\star$ of $f$ that is everywhere $\tau$-influential and satisfies $\dist(f,T^\star) \le \Delta(f)\cdot \tau \le \eps$.  Let $T^{\star}_{\mathrm{trunc}}$ be $T^\star$ truncated to depth~$d$ (where the new leaves introduced by truncated paths are labeled with arbitrary leaf values, say $1$).  This tree $T^{\star}_{\mathrm{trunc}}$ is a depth-$d$, size-$s$, everywhere $\tau$-influential tree that satisfies $\dist(f,T^{\star}_{\mathrm{trunc}}) \le \dist(f,T^\star) + \eps \le 2\eps$.  Therefore, by~\Cref{claim: BuildDT optimal} $\BuildDT$ returns a tree $T$ that also satisfies $\dist(f,T)\le 2\eps$.


As for runtime, in~\Cref{claim: runtime}  we assumed that variable influences can be computed exactly in unit time, whereas in  actuality, we can only obtain estimates of these quantities via random sampling.  By inspection of our proofs, it is straightforward to verify that it suffices for these estimates to be accurate to $\pm \frac{\tau}{2}$. Query access to $f$ provides us with query access to $f_\pi$ for any $\pi$, and hence by the Chernoff bound, we can estimate $\Inf_i(f_\pi)$ to accuracy $\pm \frac{\tau}{2}$ and with confidence $1-\delta$ using $O(\log(1/\delta)/\tau^2)$ queries and in $n\cdot O(\log(1/\delta)/\tau^2)$ time.  As shown in~\Cref{claim: runtime}, the number of times variables influences are computed throughout the execution of the algorithm is at most $n\cdot ((\log s)/\tau)^{O(d)}$, and so by setting $\delta < 1/(n\cdot ((\log s)/\tau)^{O(d)})$, we ensure that w.h.p.~all our estimates are indeed accurate to within $\pm \frac{\tau}{2}$.  Combining this with~\Cref{claim: runtime}, the overall runtime of our algorithm is  
\[ n \cdot s^2 \cdot \left(\frac{\log s}{\tau}\right)^{O(d)} \cdot \frac{n}{\tau^2} \big(\log n + d\log\left((\log s)/\tau\right)\big) \le \tilde{O}(n^2) \cdot (s/\eps)^{O(\log((\log s)/\eps))}, \]
and this completes the proof. 
\end{proof}


\section{Learning monotone target functions in the agnostic setting}

In the remainder of this paper we extend our analysis from the realizable to the agnostic setting.  As alluded to in the introduction, the main challenge that arises when if $f$ is merely {\sl close to} a small-size decision tree, instead of being {\sl exactly} a small-size decision tree, is that we no longer have a good bound on the number of its variables with  influence at least $\tau$.  In the realizable setting we were able to bound this number by $(\log s)/\tau$ (\Cref{eq:not too many influential} in the proof of~\Cref{lem: main realizable}) but this crucially relied on the assumption that $f$ and and its subfunctions are size-$s$ decision trees, and hence have total influence at most $\log s$.  

The way we handle this in the case of general target functions is somewhat involved; we give the full analysis in the next section.  In this section we consider the special case of {\sl monotone} target functions and prove~\Cref{thm:monotone}.  For monotone functions $f$, we show that we can easily bound the number of variables of influence at least $\tau$ by $1/\tau^2$, even if $f$ is not a small-size decision tree.  Furthermore, we also show that for monotone targets $f$ our algorithm does not need membership queries to $f$ and can instead rely only on uniform random labeled examples.


We recall two basic facts from the Fourier analysis of boolean functions:  

\begin{fact}[Parseval's identity]\label{fact:Parseval}
    For all boolean functions $f : \bits^n \to \bits$,
    \[\sum_{S \subseteq [n]} \hat{f}(S)^2 = \E[f(\bx)^2] = 1.\]
\end{fact}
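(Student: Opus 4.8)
The plan is to deduce Parseval's identity from two ingredients: the orthonormality of the Fourier characters $\chi_S(x) \coloneqq \prod_{i \in S} x_i$ under the uniform distribution on $\bn$, and the fact that $f$ takes values in $\bits$. First I would establish the orthonormality relation: for all $S, T \subseteq [n]$,
\[
\Ex_{\bx \sim \bn}\big[\chi_S(\bx)\,\chi_T(\bx)\big] \;=\; \Ex_{\bx \sim \bn}\Big[\,\prod_{i \in S \triangle T} \bx_i\,\Big] \;=\; \Ind[S = T].
\]
The first equality uses $\bx_i^2 = 1$ for $\bx_i \in \bits$, so the variables in $S \cap T$ cancel and only those in the symmetric difference $S \triangle T$ survive; the second uses that the coordinates $\bx_1, \dots, \bx_n$ are independent and uniform on $\bits$, hence $\Ex[\prod_{i \in U} \bx_i] = \prod_{i \in U} \Ex[\bx_i]$, which equals $1$ if $U = \varnothing$ and $0$ otherwise.

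Next I would substitute the Fourier expansion $f = \sum_{S \subseteq [n]} \hat f(S)\,\chi_S$ (whose existence is recorded in \Cref{sec:prelim}) into $\E[f(\bx)^2]$, expand bilinearly, and apply the orthonormality relation:
\[
\E\big[f(\bx)^2\big] \;=\; \sum_{S, T \subseteq [n]} \hat f(S)\,\hat f(T)\,\Ex_{\bx \sim \bn}\big[\chi_S(\bx)\,\chi_T(\bx)\big] \;=\; \sum_{S \subseteq [n]} \hat f(S)^2.
\]
This establishes the first equality of the statement. For the second, since $f : \bits^n \to \bits$ we have $f(\bx)^2 = 1$ pointwise, so $\E[f(\bx)^2] = 1$, completing the chain.

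There is essentially no obstacle here: this is a standard fact from the Fourier analysis of boolean functions (see~\cite{ODBook}), and the only step requiring any care is the orthonormality computation, which is precisely where independence of the coordinates under the uniform distribution enters. Everything else is a routine bilinear expansion and the trivial observation that a $\bits$-valued function squares to the constant $1$.
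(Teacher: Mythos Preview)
Your proof is correct and is exactly the standard argument. Note that the paper itself does not prove this statement: it is recorded as a \textbf{Fact} without proof, with the reader referred to~\cite{ODBook} for background on the Fourier analysis of boolean functions.
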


\begin{fact}[Influence = linear Fourier coefficient for monotone $f$]
\label{fact: influence correlation}
    For all monotone boolean functions $f : \bits^n \to \bits$ and all $i \in [n]$, 
    \[\Inf_i(f) = \lfrac{1}{2}\E[f(\bx)\bx_i] = \lfrac{1}{2}\hat{f}(\{i\}).\]
\end{fact}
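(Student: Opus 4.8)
The plan is to reduce the claim to a one-line Fourier computation, using the rerandomization convention for influence from \Cref{def:influence} together with monotonicity. First I would record the ``derivative'' form of the first-order Fourier coefficient: conditioning on all coordinates but the $i$-th and writing $x_{-i}$ for them,
\[ \hat{f}(\{i\}) = \Ex_{\bx}[f(\bx)\,\bx_i] = \Ex_{\bx_{-i}}\left[\tfrac{1}{2}\big(f(\bx_{-i},1)-f(\bx_{-i},-1)\big)\right] = \Ex_{\bx_{-i}}\big[(D_i f)(\bx_{-i})\big], \]
where $(D_i f)(x_{-i}) \coloneqq \tfrac{1}{2}\big(f(x_{-i},1)-f(x_{-i},-1)\big)$ is the discrete $i$-th derivative. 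This is the only place where monotonicity enters: since $f$ is monotone, $f(x_{-i},1)\ge f(x_{-i},-1)$, so $(D_i f)(x_{-i})\in\{0,1\}$, and it equals $1$ exactly when flipping the $i$-th coordinate flips the value of $f$.

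Next I would evaluate $\Inf_i(f)$ directly and match it to this quantity. By \Cref{fact:influence with distance}, $\Inf_i(f)=\dist(f,f_{x_i=1})=\Ex_{\bx}\big[\Ind[f(\bx)\neq f_{x_i=1}(\bx)]\big]$, and $f_{x_i=1}(x)=f(x_{-i},1)$ does not depend on the $i$-th coordinate of $x$. Conditioning on $\bx_{-i}$: when $\bx_i=1$ the two values agree and contribute $0$; when $\bx_i=-1$ (probability $\tfrac{1}{2}$) they disagree precisely when $(D_i f)(\bx_{-i})=1$, by the previous paragraph. Hence $\Inf_i(f)=\tfrac{1}{2}\,\Ex_{\bx_{-i}}[(D_i f)(\bx_{-i})]=\tfrac{1}{2}\,\hat{f}(\{i\})$, and since $\hat{f}(\{i\})=\Ex_{\bx}[f(\bx)\bx_i]$ by definition of the Fourier coefficient, the three expressions in the statement coincide. (Equivalently, one could first observe $\Inf_i(f)=\tfrac{1}{2}\Pr[f(\bx)\neq f(\bx^{\oplus i})]$ --- the extra $\tfrac{1}{2}$ being exactly the probability that rerandomizing coordinate $i$ actually flips it --- and then that $\Pr[f(\bx)\neq f(\bx^{\oplus i})]=\Ex_{\bx_{-i}}[D_i f]$.)

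I do not anticipate a genuine obstacle here: the whole fact is a routine calculation once the conventions are pinned down. The two things to watch are the factor of $\tfrac{1}{2}$ introduced by using rerandomization (rather than coordinate-flip) in the definition of influence throughout this paper, and the monotonicity step that upgrades the indicator $\Ind[\text{the value changes}]$ into the $\{0,1\}$-valued derivative $D_i f$; for a general, non-monotone $f$ the same argument only yields $\Inf_i(f)=\tfrac{1}{2}\,\Ex_{\bx_{-i}}\big[\,|D_i f|\,\big]\ge \tfrac{1}{2}\,\big|\hat{f}(\{i\})\big|$.
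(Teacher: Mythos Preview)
Your argument is correct: the derivative identity $\hat f(\{i\})=\E_{\bx_{-i}}[D_i f]$, the observation that monotonicity forces $D_i f\in\{0,1\}$, and the $\tfrac12$ from the rerandomization convention combine exactly as you say to give $\Inf_i(f)=\tfrac12\hat f(\{i\})$. The paper does not actually supply a proof of this statement---it is recorded as a standard fact from the Fourier analysis of boolean functions---so there is no alternative argument to compare against; your write-up is precisely the routine verification one would expect.
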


Combining these facts, we also have the following, which is needed for our runtime bound. 

\begin{corollary}
\label{cor: monotone influential vars}
     For all monotone boolean functions $f : \bits^n \to \bits$ and all $\tau \in [0,1]$,
     \[|\{i~|~ \Inf_i(f) \ge \tau\}| \le \frac{1}{4\tau^2}.\]
\end{corollary}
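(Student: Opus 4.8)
The plan is to combine the two facts just stated. By \Cref{fact: influence correlation}, for a monotone $f$ each influence equals $\frac12\hat f(\{i\})$, so the condition $\Inf_i(f)\ge\tau$ translates directly into a condition on the magnitude of a degree-$1$ Fourier coefficient, namely $\hat f(\{i\})\ge 2\tau$, hence $\hat f(\{i\})^2\ge 4\tau^2$. Summing these squared coefficients over the set $A\coloneqq \{i : \Inf_i(f)\ge\tau\}$ gives $\sum_{i\in A}\hat f(\{i\})^2 \ge 4\tau^2\cdot|A|$.

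On the other hand, the singletons $\{i\}$ are distinct subsets of $[n]$, so $\sum_{i\in A}\hat f(\{i\})^2$ is at most the full Fourier weight $\sum_{S\subseteq[n]}\hat f(S)^2$, which is exactly $1$ by Parseval (\Cref{fact:Parseval}). Chaining the two inequalities yields $4\tau^2\cdot|A|\le 1$, i.e. $|A|\le \frac{1}{4\tau^2}$, which is precisely the claimed bound. (If $\tau=0$ the statement is vacuous since the right-hand side is $+\infty$, so we may assume $\tau>0$.)

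There is essentially no obstacle here — the only thing to be slightly careful about is that \Cref{fact: influence correlation} is stated for monotone boolean functions, so the hypothesis of the corollary ($f$ monotone, boolean) is exactly what is needed to invoke it, and that $\Inf_i(f)\ge 0$ always (the identity gives $\hat f(\{i\})=2\Inf_i(f)\ge 0$, consistent with monotonicity), so no absolute values are lost when we square. The whole argument is a two-line computation once the two facts are in place.
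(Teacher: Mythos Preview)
Your proposal is correct and essentially identical to the paper's own proof: both use \Cref{fact: influence correlation} to rewrite $\Inf_i(f)$ as $\tfrac12\hat f(\{i\})$, bound the sum of squared degree-$1$ Fourier coefficients by $1$ via Parseval (\Cref{fact:Parseval}), and conclude the count bound; the only cosmetic difference is that the paper phrases it as $\sum_i \Inf_i(f)^2 \le \tfrac14$ and then divides by $\tau^2$, whereas you keep the $\hat f(\{i\})^2$ and divide by $4\tau^2$.
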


\begin{proof}
The sum of squares of linear Fourier coefficients is at most the sum of squares of all Fourier coefficients, so by Parseval's identity, it is at most 1: 
\[ \sum_{i \in [n]}\Inf_i(f)^2 = \lfrac{1}{4}\sum_{i=1}^n \hat{f}(i)^2 \le \lfrac{1}{4}\sum_{S\sse [n]}\hat{f}(S)^2 = \lfrac{1}{4}.  \] 
The corollary follows since $\Inf_i(f) \ge \tau$ iff  $\Inf_i(f)^2 \ge \tau^2$.
\end{proof}

\subsection{Proof of \Cref{thm:monotone}}

\begin{theorem}[\Cref{thm:monotone} restated]
\label{lem: analysis monotone}
    Let $f : \bn\to\bits$ be a monotone boolean function that is $\opt_s$-close to a size-$s$ decision tree.  Then for $d := \log (s / \eps)$  and $\tau := \eps / \log s$, the algorithm $\BuildDT_M(f, \varnothing, s, d, \tau)$ runs in time  \[ \tilde{O}(n^2) \cdot (s/\eps)^{O(\log((\log s)/\eps))},\]  uses $\poly(s/\eps)\cdot \log n$ uniform random examples labeled by $f$, and outputs a size-$s$ decision tree hypothesis $T$ that satisfies $\dist(f,T) \le \opt_s + \eps$. 
\end{theorem}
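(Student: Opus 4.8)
The plan is to reuse the realizable-case argument of \Cref{lem: main realizable} almost verbatim, substituting two monotone-specific ingredients: that a monotone function has few influential variables (\Cref{cor: monotone influential vars}), and that its influences are degree-one Fourier coefficients (\Cref{fact: influence correlation}) and can therefore be estimated from uniform random examples without membership queries.

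For \emph{correctness} I would fix a size-$s$ decision tree $T_{\mathrm{opt}}$ with $\dist(f,T_{\mathrm{opt}})\le\opt_s$ and apply the pruning lemma (\Cref{thm:pruning general}, with $Y=\bits$ and the not-equals metric) to the pair $(f,T_{\mathrm{opt}})$ at threshold $\tau$. The tree $T^\star\coloneqq\Prune(f,T_{\mathrm{opt}},\tau)$ then has size and depth at most those of $T_{\mathrm{opt}}$, is everywhere $\tau$-influential with respect to $f$, and satisfies $\dist(f,T^\star)\le\dist(f,T_{\mathrm{opt}})+\Delta(T_{\mathrm{opt}})\cdot\tau\le\opt_s+(\log s)\tau=\opt_s+\eps$, using $\Delta(T_{\mathrm{opt}})\le\log s$. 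Truncating $T^\star$ at depth $d$ (replacing each depth-$d$ node by a leaf with an arbitrary label) preserves the everywhere-$\tau$-influential property and the size/depth bounds, and changes the function only on inputs that reach depth $d$ in $T^\star$; as $T^\star$ has at most $s$ leaves and $2^{-d}=\eps/s$, this set has measure at most $s\cdot 2^{-d}=\eps$, so $T^\star_{\mathrm{trunc}}$ obeys $\dist(f,T^\star_{\mathrm{trunc}})\le\opt_s+2\eps$. Since $T^\star_{\mathrm{trunc}}$ is a depth-$d$, size-(at most $s$), everywhere-$\tau$-influential tree, \Cref{claim: BuildDT optimal} guarantees that $\BuildDT_M(f,\varnothing,s,d,\tau)$ returns a $T$ with $\dist(f,T)\le\opt_s+2\eps$, which is the claimed bound up to the same constant slack that appears in the proof of \Cref{lem: main realizable}. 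As there, replacing the quantities that $\BuildDT$ inspects by sample estimates accurate to $\pm\tau/2$ degrades this by a further additive $O(\eps)$, the errors accumulating over the $d$ levels of recursion.

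For the \emph{running time and number of examples}, the only step of the realizable analysis (\Cref{claim: runtime}) that used "$f$ is a size-$s$ decision tree" was the bound $|S|\le(\log s)/\tau$ on the set $S$ of variables whose (estimated) influence on $f_\pi$ is $\ge\tau$. Here every $f_\pi$ is a restriction of the monotone $f$, hence monotone, so \Cref{cor: monotone influential vars} gives $|S|\le 1/(4(\tau/2)^2)=1/\tau^2$ instead (the $\tau/2$ absorbing estimation error). Feeding this into the count of \Cref{claim: runtime}---at most $(1/\tau^2)^{O(d)}$ distinct restrictions and at most $s\cdot(1/\tau^2)^{O(d)}$ recursive calls---and combining with the sampling overhead exactly as in the proof of \Cref{lem: main realizable}, with $d=\log(s/\eps)$ and $\tau=\eps/\log s$, yields running time $\tilde O(n^2)\cdot(s/\eps)^{O(\log\log(s/\eps))}$. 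For the sample complexity I would first observe, via \Cref{fact: influence correlation}, that $\Inf_i(f_\pi)=\tfrac12\E_{\bx}[f_\pi(\bx)\bx_i]$ for each free variable $i$ of $f_\pi$, i.e.\ an expectation over a single uniform input; the base-case labels $\sign(\E[f_\pi])$ and the distances $\dist(T',f_\pi)$ that $\BuildDT$ compares are likewise single-input expectations. Hence all of them can be estimated from uniform random labeled examples alone, with no queries. To bound the number of examples, draw one pool of $N$ uniform examples $(\bx,f(\bx))$ at the start; for any restriction $\pi$ with $|\pi|\le d$ reached by the algorithm, the examples in the pool consistent with $\pi$ form an i.i.d.\ sample from $f_\pi$ of size $\Omega(N 2^{-d})=\Omega(N\eps/s)$ with high probability (Chernoff, then a union bound over the $(s/\eps)^{O(\log\log(s/\eps))}$ restrictions). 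Choosing $N=\poly(s/\eps)\cdot\log n$ makes every such sub-pool large enough to estimate all the quantities above to $\pm\tau/2$ with confidence $1-\delta$, for $\delta$ inverse-polynomial in $n\cdot(s/\eps)^{O(\log\log(s/\eps))}$, which supports a final union bound over all (restriction, variable) and (restriction, candidate tree) pairs.

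The part that needs real care is the last one: verifying that \emph{every} quantity the dynamic program consults is a single-input expectation (so that, for monotone targets, random examples genuinely substitute for membership queries), that one pool of $\poly(s/\eps)\cdot\log n$ examples can be recycled across the super-polynomially many restrictions by conditioning on consistency with $\pi$ (which leaves the sample uniform on the corresponding subcube), and that the estimation errors propagating up the depth-$d$ recursion stay within the overall $O(\eps)$ budget. By contrast, the correctness reduction and the $|S|\le 1/\tau^2$ bound are essentially immediate given the pruning lemma and \Cref{cor: monotone influential vars}.
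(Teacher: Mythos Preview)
Your proposal is correct and follows essentially the same route as the paper: apply the pruning lemma to the pair $(f,T_{\mathrm{opt}})$ to get correctness via \Cref{claim: BuildDT optimal}, bound $|S|$ by $O(1/\tau^2)$ via \Cref{cor: monotone influential vars} (using that restrictions of monotone functions are monotone) to get the runtime, and use \Cref{fact: influence correlation} to reduce all needed estimates to single-input expectations over uniform examples. The only cosmetic difference is in the sample-complexity bookkeeping: the paper rewrites $\Inf_i(f_\pi)=2^{|\pi|-1}\E\big[f(\bx)\bx_i\cdot\Ind[\bx\text{ consistent with }\pi]\big]$ and estimates this unconditional expectation to accuracy $\tau\cdot 2^{-|\pi|}$, whereas you describe pooling $N$ examples and conditioning on consistency with $\pi$; both yield the same $\poly(s/\eps)\cdot\log n$ bound.
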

\smallskip 

The proof is very similar to that of~\Cref{lem: main realizable} and we point out the essential differences. 
\vspace{-5pt} 
\paragraph{Correctness.} \Cref{claim: BuildDT optimal} does not use the assumption that $f$ is exactly a size-$s$ decision tree, so correctness essentially follows from~\Cref{claim: BuildDT optimal} exactly as in the proof of~\Cref{lem: main realizable}.  Let $T_\opt$ be the size-$s$ decision tree that is $\opt_s$-close to $f$.  Our pruning lemma, \Cref{thm:pruning general}, tells us that there is a pruning $T^\star$ of $T_\opt$ that is everywhere $\tau$-influential and satisfies $\dist(f, T^\star) \le \opt_s + \eps$. Then, letting $T^{\star}_{\mathrm{trunc}}$ be $T^\star$ truncated to depth~$d$, we have that $T^{\star}_{\mathrm{trunc}}$ is a depth-$d$, size-$s$, everywhere $\tau$-influential tree that satisfies $\dist(f,T^{\star}_{\mathrm{trunc}}) \le \opt_s + O(\eps)$. Therefore, by~\Cref{claim: BuildDT optimal} $\BuildDT$ returns a tree $T$ that also satisfies $\dist(f,T)\le \opt_s + O(\eps)$.

\paragraph{Runtime.} We have the following analogue of~\Cref{claim: runtime}:  
\begin{claim}[Runtime in the monotone case]
\label{lem: runtime monotone}
Assume that variable influences of $f$ and its subfunctions can be computed exactly in unit time. For all $d,s \in \N$ and $\tau > 0$, the algorithm $\BuildDT_M(f, \varnothing, s, d, \tau)$ runs in time  $n \cdot s^2 \cdot  (1/\tau)^{O(d)}$. 
\end{claim}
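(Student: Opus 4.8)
The plan is to follow the proof of \Cref{claim: runtime} essentially verbatim, replacing the single ingredient there that relied on $f$ being a size-$s$ decision tree --- the bound $|S|\le (\log s)/\tau$ on the number of $\tau$-influential variables of $f_\pi$ (\Cref{eq:not too many influential}) --- with a bound that holds for monotone functions irrespective of their decision-tree size. The replacement is exactly \Cref{cor: monotone influential vars}: for any monotone $g:\bits^n\to\bits$ and any $\tau>0$, at most $1/(4\tau^2)$ variables have influence at least $\tau$ on $g$.

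The one point that needs a moment's thought is that \Cref{cor: monotone influential vars} must be applied not just to $f$ but to every subfunction encountered during the recursion. This is fine: inspecting \Cref{fig:BuildDT}, the first argument of every recursive call of $\BuildDT_M$ is the original $f$ and only the restriction $\pi$ grows, so every subfunction that arises is of the form $f_\pi$; and a restriction of a monotone function (fixing a subset of coordinates to constants) is again monotone. Hence for the set $S\subseteq[n]$ defined in Step 4(a) we get, at every recursive call, $|S| = |\{i : \Inf_i(f_\pi)\ge\tau\}| \le 1/(4\tau^2)$, which now plays the role of \Cref{eq:not too many influential}.

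Given this, the remainder is a direct transcription of \Cref{claim: runtime}. A restriction is extended by $\{x_i=b\}$ only when $i\in S$, so from any reachable restriction there are at most $2|S|\le 1/(2\tau^2)$ one-step extensions; since $\BuildDT$ halts once $|\pi|=d$, the number of distinct restrictions ever constructed is at most $\sum_{k=1}^d (1/(2\tau^2))^k = (1/\tau)^{O(d)}$ (for $\tau\le\tfrac12$; for larger $\tau$ no variable is $\tau$-influential on a monotone function by \Cref{fact: influence correlation}, and the tree is trivial). The early return in Step 2 ensures the recursive body Step 4 is entered at most once per pair $(\pi,s')$ with $s'\in[s]$, giving at most $s\cdot(1/\tau)^{O(d)}$ recursive calls, matching \Cref{eq:num recursive calls}. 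Outside the recursion each call does $O(n+s\cdot|S|)=O(n)+O(s/\tau^2)$ work --- $O(n)$ to compute and compare the $n$ influences in Step 4(a), and $O(s\cdot|S|)$ for the candidate $(\text{root},\text{size-split})$ pairs in Step 4(b) --- as in \Cref{eq:runtime mod recursive calls}. Multiplying the two bounds yields the claimed $n\cdot s^2\cdot(1/\tau)^{O(d)}$.

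I do not expect a genuine obstacle: the argument is a mechanical adaptation of \Cref{claim: runtime}, and the only substantive observation is the inheritance of monotonicity by all subfunctions, which makes \Cref{cor: monotone influential vars} --- rather than the total-influence bound available only in the realizable case --- the correct tool at each node. The surrounding estimation issues (replacing the unit-time influence oracle by Chernoff-based sampling, and verifying that uniform random examples labeled by $f$ suffice since conditioning $\bx$ on being consistent with $\pi$ yields a uniform example for $f_\pi$) are orthogonal to this claim and are handled as in the proof of \Cref{lem: main realizable}, which is why they are deferred to the proof of \Cref{lem: analysis monotone}.
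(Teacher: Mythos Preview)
Your proposal is correct and follows essentially the same approach as the paper: apply \Cref{cor: monotone influential vars} to bound $|S|\le 1/(4\tau^2)$ for every restriction $f_\pi$ (using that restrictions of monotone functions are monotone), then rerun the counting argument of \Cref{claim: runtime} with $(\log s)/\tau$ replaced by $1/(4\tau^2)$. The paper's proof is terser but identical in substance.
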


\begin{proof}
By \Cref{cor: monotone influential vars}, we have that for any restriction $f_\pi$, the set $S$ defined on Step 4(a) of the algorithm has size at most $|S| \le \lfrac{1}{4\tau^2}$. The rest of the proof proceeds exactly as in \Cref{claim: runtime}, where $\lfrac{\log s}{\tau}$ is replaced by $\lfrac{1}{4\tau^2}$. This gives a bound of 
\[s \cdot \left(\frac{1}{\tau}\right)^{O(d)}\]
for the number of recursive calls, and a bound of 
\[n \cdot s^2 \cdot \left(\frac{1}{\tau}\right)^{O(d)}\]
for the total running time.
\end{proof}

Finally, we remove the assumption that the variable influences of $f$ and its restrictions can be computed in unit time.  We claim that they can be efficiently estimated to sufficiently high accuracy using only uniform random labeled examples $(\bx,f(\bx))$. As in the proof of~\Cref{lem: main realizable}, it suffices to ensure that all the estimates that our algorithm makes are accurate to within $\pm\frac{\tau}{2}$.  

Using \Cref{fact: influence correlation}, we have for any $i$ and restriction $\pi$,
\[\Inf_i(f_\pi) = \lfrac{1}{2}\E[f_\pi(\bx)\bx_i] = \lfrac{1}{2}\E[f(\bx)\bx_i ~|~ \bx \text{ consistent with } \pi].\]
The right hand side is equivalent to 
\[\frac{\lfrac{1}{2}\E \big[f(\bx)\bx_i \cdot \mathbbm{1}[\bx \text{ consistent with } \pi]\big]}{\Pr[\bx \text{ consistent with } \pi]} = 2^{|\pi| - 1} \cdot \E \big[f(\bx)\bx_i \cdot \mathbbm{1}[\bx \text{ consistent with } \pi]\big].\]
To estimate $\Inf_i(f_\pi)$ to accuracy $\pm \lfrac{\tau}{2}$, it then suffices to estimate $\E\big[ f(\bx)\bx_i \cdot \mathbbm{1}[\bx \text{ consistent with } \pi] \big]$ to accuracy $\pm \tau \cdot 2^{-|\pi|}$. By Chernoff bounds, this can be estimated with confidence $\ge 1 - \delta$ with 
\[O\left(\frac{1}{\tau^2} \cdot 2^{2d} \cdot \log (1 / \delta)\right)\]
uniform random examples $(\bx,f(\bx))$ labeled by $f$, where we have used the fact that $|\pi| \le d$. Each estimate takes time 
\[n \cdot O\left(\frac{1}{\tau^2} \cdot 2^{2d} \cdot \log (1 / \delta)\right).\]
The number of times variable influences are computed during the execution of $\BuildDT$ is at most $n \cdot (\lfrac{1}{\tau})^{O(d)}$, so by setting  $\delta < 1/(n \cdot (\lfrac{1}{\tau})^{O(d)})$ we ensure that w.h.p.~all our estimates are indeed accurate to within $\pm \frac{\tau}{2}$.  The sample complexity of our algorithm is 
\[O\left(\frac{1}{\tau^2} \cdot 2^{2d} \cdot \big(\log n + d\log\lfrac{1}{\tau}\big)\right) =  \poly(s/\eps)\cdot \log n,\]
and by~\Cref{lem: runtime monotone}, the overall runtime of our algorithm is 
\[ n\cdot s^2 \cdot \left(\frac1{\tau}\right)^{O(d)} \cdot \frac{n\cdot 2^{2d}}{\tau^2} \cdot \left(\log n + d\log(1/\tau)\right) \le  \tilde{O}(n^2)\cdot (s/\eps)^{O(\log((\log s)/\eps))}. \] 
This completes the proof of~\Cref{thm:monotone}.


\label{sec:monotone}

\section{Learning general target functions in the agnostic setting}




In this section we prove~\Cref{thm:main}.  The algorithm for the agnostic setting calls the same procedure $\BuildDT$ as in the realizable setting, but on  the smoothed version $f_\delta$ of function $f$ (recall~\Cref{def:smooth f}). 

\paragraph{Correctness.} We'll prove that the output of $\BuildDT$ on $f_\delta$ is close to $f$. For that, we'll need some facts about the noise operator.
\begin{fact}[Noise sensitivity of decision trees]
    \label{fact:NS-DT}
    For any $\delta \in (0,1)$ and decision tree $T: \bn \to \bits$,
\[         \dist(T_\delta, T) \leq \Delta(T) \cdot \delta.
    \]
\end{fact}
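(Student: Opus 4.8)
The plan is to prove this directly from the definition of the noise operator, without induction. Since $T_\delta$ is real-valued, $\dist(T_\delta, T) = \Ex_{\bx}[\,|T_\delta(\bx) - T(\bx)|\,]$ with respect to the absolute-value metric. Writing $T_\delta(\bx) = \Ex_{\tilde\bx \sim_\delta \bx}[T(\tilde\bx)]$ and applying the triangle inequality (equivalently, Jensen's inequality for the convex function $|\cdot|$) lets me pull the absolute value inside the inner expectation:
\[ \dist(T_\delta, T) \;\le\; \Ex_{\bx}\,\Ex_{\tilde\bx \sim_\delta \bx}\big[\,|T(\tilde\bx) - T(\bx)|\,\big]. \]

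Next I would fix $\bx$ and let $P(\bx) \subseteq [n]$ be the set of variables queried along the root-to-leaf path that $\bx$ follows in $T$, so that $|P(\bx)|$ is the depth of the leaf $\bx$ reaches. The key observation is that if $\tilde\bx$ agrees with $\bx$ on every coordinate in $P(\bx)$, then $\tilde\bx$ follows the exact same path and reaches the same leaf, hence $T(\tilde\bx) = T(\bx)$; and otherwise $|T(\tilde\bx) - T(\bx)| \le 2$ since $T$ takes values in $\bits$. Thus $|T(\tilde\bx) - T(\bx)| \le 2\cdot\Ind[\exists\, i \in P(\bx): \tilde\bx_i \ne \bx_i]$. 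Since each coordinate of $\tilde\bx$ differs from that of $\bx$ with probability exactly $\tfrac{\delta}{2}$ independently, a union bound over $P(\bx)$ gives $\Ex_{\tilde\bx \sim_\delta \bx}[\,|T(\tilde\bx) - T(\bx)|\,] \le 2\cdot |P(\bx)|\cdot \tfrac{\delta}{2} = \delta\cdot|P(\bx)|$.

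Finally I would take the expectation over $\bx$, which yields $\dist(T_\delta, T) \le \delta\cdot \Ex_{\bx}[\,|P(\bx)|\,] = \delta\cdot\Delta(T)$, the last equality being the definition of $\Delta(T)$ as the expected depth of the leaf reached by a uniformly random input. There is no substantial obstacle here; the only point that needs care is the interplay of the two metrics — $T$ is Boolean-valued so $|T(\tilde\bx) - T(\bx)| \in \{0,2\}$, whereas $T_\delta$ is real-valued — and ensuring the resulting factor of $2$ cancels exactly against the per-coordinate flip probability $\tfrac{\delta}{2}$ of the noise operator, so that the bound is $\Delta(T)\cdot\delta$ rather than $2\Delta(T)\cdot\delta$.
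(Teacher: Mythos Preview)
Your proposal is correct and essentially identical to the paper's proof: both bound $|T_\delta(\bx)-T(\bx)|$ by $2\Pr_{\tilde\bx\sim_\delta\bx}[T(\tilde\bx)\ne T(\bx)]$, observe that a disagreement forces some coordinate on the root-to-leaf path of $\bx$ to flip, and then union-bound over the $|P(\bx)|$ such coordinates (each flipping with probability $\delta/2$) before averaging over $\bx$ to get $\Delta(T)\cdot\delta$. The only cosmetic difference is that where you invoke Jensen's inequality to pull the absolute value inside, the paper notes that for fixed $\bx$ the quantity $T(\tilde\bx)-T(\bx)$ has constant sign and so writes that step as an equality.
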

\begin{proof}
    We expand the distance between $T_\delta$ and $T$,
    \begin{align*}
        \dist(T_\delta, T) &= \Ex_{\bx \sim \bits^n}\left[|T_\delta(\bx) - T(\bx)|\right] \\
        &= \Ex_{\bx \sim \bits^n}\left[\left|\Ex_{\tilde\bx \sim_{\delta} \bx}[T(\tilde \bx)] - T(\bx)\right|\right] \\
        &= 2 \cdot \Ex_{\bx \sim \bits^n}\left[\Prx_{\tilde\bx \sim_{\delta} \bx}[T(\tilde\bx ) \neq T(\bx)\right].
    \end{align*}
    For any $\bx \in \bits^n$, let $d(\bx)$ be the depth of the leaf in $T$ that $\bx$ reaches. In order for $T(\tilde\bx) \neq T(\bx)$, $T(\tilde\bx)$ must reach a different leaf in $T$ than $\bx$ does. For that to happen, one of the $d(\bx)$ coordinates $T$ queries for $\bx$ must flip. By union bound, this occurs with probability at most $\frac{\delta \cdot d(\bx)}{2}$. Therefore,
\[         \dist(T_\delta, T) \leq 2 \cdot \Ex_{\bx \sim \bits^n}\left[\frac{\delta \cdot d(\bx)}{2}\right] = \Delta(T) \cdot \delta.\qedhere
\] \end{proof}

\begin{fact}[Noise operator is self-adjoint, also in \cite{ODBook}]
    \label{fact:self-adjoint}
    For any functions $f, g: \bn \to \bits$,
    \begin{align*}
        \dist(f_\delta, g) = \dist(f, g_\delta). 
    \end{align*}
\end{fact}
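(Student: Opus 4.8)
The plan is to reduce this identity about the $L_1$-style distance $\dist$ to the standard fact that the noise operator is self-adjoint, exploiting that both $f$ and $g$ are boolean-valued. First I would observe: for any $g\colon \bn\to\bits$ and any $h\colon \bn\to [-1,1]$ and any fixed $x$, since $g(x)\in\{\pm 1\}$ and $h(x)\in[-1,1]$ one checks (considering the cases $g(x)=1$ and $g(x)=-1$ separately) that $|h(x)-g(x)| = 1 - g(x)h(x)$. Averaging over uniform $\bx$ gives $\dist(h,g) = \dist(g,h) = 1 - \Ex_{\bx}[g(\bx)h(\bx)]$. Instantiating this with $h = f_\delta$ (which maps into $[-1,1]$ by \Cref{def:smooth f}) gives $\dist(f_\delta, g) = 1 - \Ex_{\bx}[g(\bx)f_\delta(\bx)]$, and instantiating it with the roles of $f$ and $g$ swapped, $h = g_\delta$, gives $\dist(f, g_\delta) = 1 - \Ex_{\bx}[f(\bx)g_\delta(\bx)]$. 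So the claim reduces to showing $\Ex_{\bx}[g(\bx)f_\delta(\bx)] = \Ex_{\bx}[f(\bx)g_\delta(\bx)]$.

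For this, I would unfold \Cref{def:smooth f}: $\Ex_{\bx}[f(\bx)g_\delta(\bx)] = \Ex_{\bx}\big[f(\bx)\,\Ex_{\tilde\bx\sim_\delta\bx}[g(\tilde\bx)]\big] = \Ex_{(\bx,\tilde\bx)}[f(\bx)g(\tilde\bx)]$, where $(\bx,\tilde\bx)$ denotes the $\delta$-correlated pair (uniform $\bx$, with each $\tilde\bx_i$ independently equal to $\bx_i$ with probability $1-\tfrac\delta2$ and to $-\bx_i$ with probability $\tfrac\delta2$). The key point is that this joint distribution is symmetric under swapping $\bx\leftrightarrow\tilde\bx$: its probability mass at a pair $(x,\tilde x)$ is $2^{-n}\prod_{i\in[n]}\big((1-\tfrac\delta2)\,\Ind[\tilde x_i = x_i] + \tfrac\delta2\,\Ind[\tilde x_i \ne x_i]\big)$, which is unchanged by the swap. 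Hence $\Ex_{(\bx,\tilde\bx)}[f(\bx)g(\tilde\bx)] = \Ex_{(\bx,\tilde\bx)}[f(\tilde\bx)g(\bx)] = \Ex_{\bx}\big[g(\bx)\,\Ex_{\tilde\bx\sim_\delta\bx}[f(\tilde\bx)]\big] = \Ex_{\bx}[g(\bx)f_\delta(\bx)]$, which is exactly what we need.

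I do not expect any real obstacle: the one step worth stating carefully is the first reduction, which genuinely uses that \emph{both} arguments of $\dist$ take values in $\bits$, so that the absolute value collapses to one minus a product and distance turns into a correlation. As an alternative to the symmetry-of-distribution argument, one could instead pass to the Fourier side and use $\widehat{f_\delta}(S) = (1-\delta)^{|S|}\hat f(S)$, which gives $\Ex_{\bx}[g(\bx)f_\delta(\bx)] = \sum_{S\subseteq[n]}(1-\delta)^{|S|}\hat f(S)\,\hat g(S)$; this expression is manifestly symmetric in $f$ and $g$, and is the familiar statement that $\mathrm{T}_{1-\delta}$ is self-adjoint.
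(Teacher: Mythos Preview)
Your proof is correct and follows essentially the same approach as the paper's: both use the $\pm 1$ structure to collapse the absolute value (you write $|h-g|=1-gh$, the paper writes $|f_\delta(\bx)-g(\bx)| = 2\Pr_{\tilde\bx}[f(\tilde\bx)\ne g(\bx)]$, which are equivalent), and then both invoke the exchangeability of the $\delta$-correlated pair $(\bx,\tilde\bx)$ to conclude. Your mention of the Fourier alternative is a nice addition but not used in the paper's argument.
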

\begin{proof}
    Drawing $\bx \sim \bits^n$ uniformly and then $\tilde\bx \sim_{\delta} \bx$ gives the same joint distribution over $(\bx, \tilde\bx)$ as first drawing $\tilde\bx \sim_{\delta} \bits^n$ uniformly and then $\bx \sim_{\delta} \tilde\bx$. That fact is used between the third and fourth line of the following series of algebraic manipulations. 
    \begin{align*}
        \dist(f_\delta, g) &= \Ex_{\bx \sim \bits^n}\left[\left|\Ex_{\tilde\bx \sim_{\delta} \bx}[f_\delta(\tilde \bx)] - g(\bx)\right|\right] \\
        &= 2 \cdot \Ex_{\bx \sim \bits^n}\left[\Prx_{\tilde\bx \sim_{\delta} \bx}[f(\tilde\bx ) \neq g(\bx)]\right] \\
        &= 2 \cdot \Ex_{\bx \sim \bits^n, \tilde\bx \sim_{\delta} \bx}\big[\Ind[f(\tilde\bx ) \neq g(\bx)]\big] \\
        &= 2 \cdot \Ex_{\tilde\bx \sim \bits^n, \bx \sim_{\delta} \bx}\big[\Ind[f(\tilde\bx ) \neq g(\bx)]\big] \\
        &= 2 \cdot \Ex_{\tilde\bx \sim \bits^n}\left[\Prx_{\bx \sim_{\delta} \bx}[f(\tilde\bx ) \neq g(\bx)]\right] \\
        &= \dist(f, g_{\delta}). \qedhere 
    \end{align*}
\end{proof}

Given the above two facts, we are able to prove that our algorithm has the desired error on guarantee.
\begin{lemma}\label{lemma:correctness on smoothed}
    For any size $s$ and $\eps \in (0,1)$, set $d \coloneqq \log(\frac{s}{\eps})$ and $\tau \coloneqq \frac{\eps}{\log s}$. Then, for any $\delta \leq \frac{\eps}{\log s}$, $\BuildDT_M(f_\delta,\varnothing,s,d,\tau)$ returns a decision tree $T$ satisfying
    \begin{align*}
        \dist(T, f) \leq \opt_s + 4  \eps.
    \end{align*}
\end{lemma}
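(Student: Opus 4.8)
The plan is to run $\BuildDT$ on the smoothed function $f_\delta$ and argue that the tree it returns is close not just to $f_\delta$ but to $f$ itself. By \Cref{claim: BuildDT optimal}, the output $T$ of $\BuildDT_M(f_\delta,\varnothing,s,d,\tau)$ has minimal distance to $f_\delta$ among all depth-$d$, size-$s$, everywhere $\tau$-influential trees. So the first step is to exhibit \emph{some} such tree that is close to $f_\delta$; then $T$ inherits that closeness, and a final triangle-inequality-style step converts closeness-to-$f_\delta$ into closeness-to-$f$.

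For the existence step, let $T_\opt$ be the size-$s$ decision tree with $\dist(T_\opt, f) \le \opt_s$. First I would bound $\dist(T_\opt, f_\delta)$: by \Cref{fact:self-adjoint}, $\dist(T_\opt, f_\delta) = \dist((T_\opt)_\delta, f) \le \dist((T_\opt)_\delta, T_\opt) + \dist(T_\opt, f) \le \Delta(T_\opt)\cdot\delta + \opt_s \le \eps + \opt_s$, using \Cref{fact:NS-DT} and $\Delta(T_\opt)\le\log s$ together with $\delta \le \eps/\log s$. Now apply the pruning lemma \Cref{thm:pruning general} to $T_\opt$ \emph{against the target $f_\delta$} with threshold $\tau = \eps/\log s$: we get a pruning $T^\star$ of $T_\opt$ that is everywhere $\tau$-influential with respect to $f_\delta$, has size and depth at most those of $T_\opt$, and satisfies $\dist(T^\star, f_\delta) \le \dist(T_\opt, f_\delta) + \Delta(T_\opt)\cdot\tau \le (\opt_s + \eps) + \eps = \opt_s + 2\eps$. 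Truncating $T^\star$ to depth $d$ (filling in truncated leaves arbitrarily) costs at most $\Prx[\text{a depth-}d\text{ path is followed}]$ in distance; since the truncated paths have length exactly $d = \log(s/\eps)$ and there are at most $s$ of them, this is at most $s\cdot 2^{-d} = \eps$, so the truncated tree $T^\star_{\mathrm{trunc}}$ is a depth-$d$, size-$s$, everywhere $\tau$-influential tree (truncation only removes internal nodes, so the everywhere-$\tau$-influential property is preserved) with $\dist(T^\star_{\mathrm{trunc}}, f_\delta) \le \opt_s + 3\eps$.

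By \Cref{claim: BuildDT optimal}, the tree $T$ returned by $\BuildDT$ therefore also satisfies $\dist(T, f_\delta) \le \opt_s + 3\eps$. Finally, convert to a guarantee against $f$: since $T$ is a decision tree of size $s$ (hence $\Delta(T)\le\log s$), write $\dist(T,f) \le \dist(T, f_\delta) + \dist(f_\delta, f)$, and bound $\dist(f_\delta, f) = \dist(f, T_{1/\delta}^{\ldots})$—more directly, $\dist(f_\delta, f) = 2\Prx_{\bx,\tilde\bx\sim_\delta\bx}[f(\tilde\bx)\ne f(\bx)]$, which need not be small for a general $f$. So the naive triangle inequality through $f_\delta$ is too lossy; instead I would push the smoothing onto $T$: $\dist(T, f_\delta) = \dist(T_\delta, f)$ by \Cref{fact:self-adjoint}, and then $\dist(T, f) \le \dist(T, T_\delta) + \dist(T_\delta, f) \le \Delta(T)\cdot\delta + \dist(T, f_\delta) \le \eps + (\opt_s + 3\eps) = \opt_s + 4\eps$, using \Cref{fact:NS-DT} and $\delta\le\eps/\log s$, $\Delta(T)\le\log s$.

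The main obstacle is precisely this last conversion: the quantity the algorithm actually optimizes is $\dist(\cdot, f_\delta)$, but we want a bound on $\dist(\cdot, f)$, and $f_\delta$ can be far from $f$ in general. The resolution—moving the noise operator from the target onto the (low-depth) hypothesis via self-adjointness and then using that decision trees are noise-stable—is the crux, and it is exactly why we smooth rather than, say, work with $f$ directly: the pruning lemma needs a target with few influential variables at every restriction, which $f_\delta$ provides (to be exploited in the runtime analysis) while still being ``recoverable'' back to $f$ because our hypothesis is a shallow tree. The remaining steps—bounding $\dist(T_\opt, f_\delta)$, applying the pruning lemma with $f_\delta$ as target, and the depth-$d$ truncation cost—are routine given the facts already established.
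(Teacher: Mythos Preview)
Your proof is correct and follows essentially the same route as the paper: bound $\dist(T_\opt,f_\delta)$ via self-adjointness and noise stability of $T_\opt$, apply the pruning lemma with $f_\delta$ as target (the paper notes this uses $Y=[-1,1]$), truncate to depth $d$, invoke \Cref{claim: BuildDT optimal}, and finally push the smoothing back onto $T$ via \Cref{fact:self-adjoint} and \Cref{fact:NS-DT} to get $\dist(T,f)\le\opt_s+4\eps$. Your explicit discussion of why the naive triangle inequality through $\dist(f_\delta,f)$ fails, and how self-adjointness rescues it, matches the paper's key maneuver exactly.
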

\begin{proof}
   Let $T^\star$ be the size-$s$ decision tree that $f$ is $\opt_s$-close to. First, we show $T^\star$ is also close to $f$.
    \begin{align*}
        \dist(T^\star, f_\delta) &= \dist((T^\star)_\delta, f) \tag{\Cref{fact:self-adjoint}} \\
        &\leq \dist((T^\star)_\delta, T^\star) + \dist(T^\star ,f) \tag{Triangle inequality} \\
        &\leq \delta \cdot \Delta(T^\star) + \opt_s \tag{\Cref{fact:NS-DT}} \\
        &\leq \eps + \opt_s. \tag{$\Delta(T^\star) \leq \log(s), \delta \leq \frac{\eps}{\log s}$}
    \end{align*}
    
     By \Cref{thm:pruning general} (applied with the metric space $Y = [-1,1]$), we know that there is some $T^\star_{\pruned}$ that is everywhere $\tau$-influential with respect to $f_\delta$ satisfying,
     \begin{align*}
         \dist(T^\star_{\pruned}, f_\delta) \leq \big(\opt_s + \eps\big) + \eps = \opt_s + 2\eps.
     \end{align*}
     As in the proof of \Cref{lem: main realizable}, let $T^{\star}_{\mathrm{trunc}}$ be $T^\star_{\pruned}$ truncated to depth~$d$ (where the new leaves introduced by truncated paths are labeled with arbitrary leaf values, say $1$).  This tree $T^{\star}_{\mathrm{trunc}}$ is a depth-$d$, size-$s$, everywhere $\tau$-influential tree that satisfies
     \begin{align*}
         \dist(f_\delta,T^{\star}_{\mathrm{trunc}}) \le \dist(f_\delta,T^\star_{\pruned}) + \eps \le \opt_s + 3\eps.
     \end{align*}
     Therefore, by~\Cref{claim: BuildDT optimal} $\BuildDT$ returns a tree $T$ that also satisfies $\dist(f_\delta,T)\le \opt_s + 3\eps$. Finally, we bound the distance between $f$ and $T$.
     \begin{align*}
         \dist(f, T) &\leq \dist(f, T_\delta) + \dist(T, T_\delta) \tag{Triangle inequality} \\
         &\leq \dist(f_\delta, T) + \dist(T, T_\delta) \tag{\Cref{fact:self-adjoint}} \\
         &\leq (\opt_s + 3\eps) + \delta \cdot \Delta(T) \tag{\Cref{fact:NS-DT}} \\
         &\leq  (\opt_s + 3\eps) + \eps = \opt_s + 4\eps. \tag{$\Delta(T) \leq \log s, \delta \leq \frac{\eps}{\log s}$}
     \end{align*}
\end{proof}

\paragraph{Efficiency.} Now we analyze the runtime of the procedure $\BuildDT$ on the smoothed function $f_{\delta}$. As in the proof of \Cref{claim: runtime} for the realizable setting, we need to upper bound the number of different recursive calls to the procedure. The key step is to control the size of $S$, the set of variables that is sufficient influential (w.r.t.\ function $(f_\delta)_\pi$ and threshold $\tau$).

We start with a well-known fact stating that the total influence of any $\delta$-smoothed function is at most $O(1/\delta)$. Here, we use a slightly different version of influence that is defined as the expected squared difference between the functions values at $\bx$ and $\bx^{\sim i}$. This squared influence does not fit into \Cref{def:influence} since the squared difference is not a metric, but the advantage is that it can be easily expressed in terms of the Fourier coefficients of the function.

\begin{fact}[Total influence of smoothed functions]\label{fact:total influence smoothed}
    For any $f: \bn \to \bits$ and $\delta \in (0, 1]$,
    \[
        \sum_{i=1}^{n}\Ex_{\bx \sim \bn}\left[(f_\delta(\bx) - f_\delta(\bx^{\sim i}))^2\right] \le \frac{1}{e\delta}.
    \]
\end{fact}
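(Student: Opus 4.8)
The plan is to express everything in terms of the Fourier expansion of $f$. Recall the standard fact that if $f = \sum_{S} \hat f(S) \prod_{i \in S} x_i$, then the smoothed function has Fourier expansion $f_\delta = \sum_S (1-\delta)^{|S|} \hat f(S) \prod_{i \in S} x_i$. The key step is a clean Fourier formula for the ``squared influence'': for a real-valued $g : \bn \to \R$,
\[
\Ex_{\bx}\big[(g(\bx) - g(\bx^{\sim i}))^2\big] = \sum_{S \ni i} \hat g(S)^2,
\]
which follows by noting that $g(\bx) - g(\bx^{\sim i})$ kills all Fourier characters not containing $i$, halves nothing (the rerandomization makes the $i$-containing part appear with a $\pm$ sign averaging appropriately), and then applying Parseval. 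I would state this as a short sub-claim and verify it directly: write $g(x) = g^{=i}(x) + g^{\neq i}(x)$ splitting by whether $S \ni i$; then $g(\bx) - g(\bx^{\sim i})$ has the same distribution structure that makes its second moment equal $\sum_{S \ni i}\hat g(S)^2$ after averaging over the rerandomized coordinate. (One has to be slightly careful with the factor of $2$ coming from rerandomizing versus flipping; the claimed bound $1/(e\delta)$ already absorbs such constants, so precise bookkeeping only affects whether the constant is $1/(e\delta)$ or, say, $2/(e\delta)$.)

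Next, apply this with $g = f_\delta$, whose Fourier coefficients are $\widehat{f_\delta}(S) = (1-\delta)^{|S|}\hat f(S)$. Summing over $i$,
\[
\sum_{i=1}^n \Ex_{\bx}\big[(f_\delta(\bx) - f_\delta(\bx^{\sim i}))^2\big] = \sum_{i=1}^n \sum_{S \ni i} (1-\delta)^{2|S|}\hat f(S)^2 = \sum_{S \subseteq [n]} |S|\,(1-\delta)^{2|S|}\hat f(S)^2,
\]
where the last equality swaps the order of summation (each $S$ is counted once for each $i \in S$). Now bound $|S|(1-\delta)^{2|S|}$ uniformly over $S$: the function $t \mapsto t(1-\delta)^{2t}$ on $t \ge 0$ is maximized at $t = -1/(2\ln(1-\delta))$, giving maximum value $-1/(2 e \ln(1-\delta)) \le 1/(2e\delta)$ using $-\ln(1-\delta) \ge \delta$. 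Hence
\[
\sum_{S} |S|(1-\delta)^{2|S|}\hat f(S)^2 \le \frac{1}{2e\delta}\sum_S \hat f(S)^2 = \frac{1}{2e\delta}
\]
by Parseval's identity (\Cref{fact:Parseval}), since $f$ is boolean-valued. This is even stronger than the claimed $1/(e\delta)$; depending on how the $\pm$ / factor-of-$2$ bookkeeping in the sub-claim lands, the stated bound $1/(e\delta)$ holds with room to spare, so I would just carry the constant through honestly and conclude $\le 1/(e\delta)$.

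The main obstacle — really the only non-routine point — is getting the sub-claim $\Ex_{\bx}[(g(\bx)-g(\bx^{\sim i}))^2] = \sum_{S \ni i}\hat g(S)^2$ (or its correctly-constanted variant) exactly right, since $\bx^{\sim i}$ rerandomizes rather than flips coordinate $i$. I would handle this by conditioning on $\bx_{[n]\setminus i}$ and on the shared randomness: write $a = g(x)|_{x_i = +1}$ and $b = g(x)|_{x_i = -1}$ as functions of $x_{[n]\setminus i}$; then $g(\bx) - g(\bx^{\sim i})$ is $0$ with probability $1/2$ (resampled bit equals original) and $\pm(a-b)$ otherwise, so its conditional second moment is $\tfrac12 (a-b)^2$. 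Averaging over $x_{[n]\setminus i}$ gives $\tfrac12 \Ex[(a-b)^2]$, and since $a - b = 2\sum_{S \ni i}\hat g(S)\prod_{j \in S, j \neq i} x_j$, Parseval on $[n]\setminus i$ yields $\Ex[(a-b)^2] = 4\sum_{S \ni i}\hat g(S)^2$, hence the conditional quantity is $2\sum_{S\ni i}\hat g(S)^2$. Propagating this factor, the final sum is $\le 2/(2e\delta) = 1/(e\delta)$, matching the statement. Everything else is a one-line maximization and an application of Parseval.
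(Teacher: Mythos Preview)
Your proposal is correct and follows essentially the same route as the paper: both compute the Fourier formula $\sum_i \Ex[(f_\delta(\bx)-f_\delta(\bx^{\sim i}))^2] = 2\sum_S |S|(1-\delta)^{2|S|}\hat f(S)^2$, then bound the coefficient uniformly via the one-variable maximization $\max_{x\ge 0} x e^{-\delta x} = 1/(e\delta)$ and finish with Parseval. The only difference is presentational---the paper invokes the level-weighted Fourier formula for total squared influence as a known fact, whereas you (correctly) work out the factor-of-$2$ bookkeeping for $\bx^{\sim i}$ explicitly.
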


\begin{proof}
    Suppose that the Fourier expansion of $f$ is $f(x) = \sum_{S \subseteq [n]}\wh f(S)\prod_{i \in S}x_i$. The Fourier coefficients of $f_\delta$ are given by $\wh{f_\delta}(S) = (1 - \delta)^{|S|}\wh f(S)$. Then, using the Fourier formula for the total squared influence,
    \begin{align*}
        \sum_{i=1}^{n}\Ex_{\bx \sim \bn}\left[(f_\delta(\bx) - f_\delta(\bx^{\sim i}))^2\right]
    &=   2\sum_{S \subseteq [n]}|S|\cdot \left[\wh{f_\delta}(S)\right]^2\\
    &=   \sum_{S \subseteq [n]}2|S|\cdot (1 - \delta)^{2|S|} \cdot \left[\wh f(S)\right]^2\\
    &\le \frac{1}{e\delta}\sum_{S \subseteq [n]}\left[\wh f(S)\right]^2\\
    &= \frac{1}{e\delta}.
    \end{align*}
    The third step applies $\max_{x \ge 0}x(1-\delta)^x \le \max_{x \ge 0}xe^{-\delta x} = 1/(e\delta)$,
    and the last step applies Parseval's identity (\Cref{fact:Parseval}). 
\end{proof}

For any restriction $\pi$, applying \Cref{fact:total influence smoothed} to $f_\pi$ allows us to control the number of variables that have large influences w.r.t.\ $(f_\pi)_\delta$. To upper bound the runtime of $\BuildDT$, however, we need a similar guarantee for the function $(f_\delta)_\pi$, which is different from $(f_\pi)_\delta$ in general. Fortunately, the following fact states that for small $\delta$, the two functions are pointwise close, and thus allows us to relate the influences of each variable $x_i$ w.r.t.\ the two functions.

\begin{fact}\label{fact:similar infuence}
    For any $f: \bn \to \bits$ and restriction $\pi$, it holds for every $x \in \bn$ that
    \[
        |(f_\pi)_\delta(x) - (f_\delta)_\pi(x)| \le \delta|\pi|.
    \]
    Furthermore, for every $i \in [n]$,
    \[
        \left|\Inf_i((f_\pi)_\delta) - \Inf_i((f_\delta)_\pi)\right| \le 2\delta|\pi|.
    \]
\end{fact}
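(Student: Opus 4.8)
The plan is to write both composite functions as averages of $f$ over noisy copies of the input, observe that the two resulting noise distributions differ only in how they treat the $|\pi|$ coordinates that $\pi$ fixes, and then control the gap with a one-line coupling.

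First I would reduce to inputs $x$ that are consistent with $\pi$, which loses nothing since each of $f_\pi$, $(f_\pi)_\delta$, and $(f_\delta)_\pi$ ignores the coordinates $\pi$ fixes. For such $x$, unfolding the definitions gives $(f_\pi)_\delta(x) = \Ex_{\bm{z}}[f(\bm{z})]$, where $\bm{z}$ is $x$ with every \emph{free} coordinate (one not fixed by $\pi$) re-randomized independently with probability $\delta$ and every fixed coordinate pinned to its $\pi$-value, and $(f_\delta)_\pi(x) = \Ex_{\bm{z}'}[f(\bm{z}')]$, where $\bm{z}'$ is $x$ with \emph{every} coordinate re-randomized independently with probability $\delta$. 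Thus $\bm{z}$ and $\bm{z}'$ have the same marginal law on the free coordinates, and differ only in that each of the $|\pi|$ fixed coordinates, which equals its $\pi$-value in $\bm{z}$, is instead flipped with probability $\delta/2$ in $\bm{z}'$ (using the equivalent description of $\delta$-smoothing from \Cref{def:smooth f}).

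Next I would couple $\bm{z}$ and $\bm{z}'$ so that they agree on all free coordinates; then $\bm{z}\ne\bm{z}'$ only if some fixed coordinate is flipped in $\bm{z}'$, an event of probability at most $|\pi|\cdot\delta/2$ by a union bound. Since $f$ is $\{\pm1\}$-valued we have $|f(\bm{z})-f(\bm{z}')|\le 2\cdot\Ind[\bm{z}\ne\bm{z}']$, so taking expectations yields $|(f_\pi)_\delta(x)-(f_\delta)_\pi(x)|\le 2\cdot\Pr[\bm{z}\ne\bm{z}']\le\delta|\pi|$, the first bound. For the second bound, note that $(f_\pi)_\delta$ and $(f_\delta)_\pi$ map into $[-1,1]$, so the relevant $\Inf_i$ is the absolute-value-metric version $\Ex_{\bx}[|g(\bx)-g(\bx^i)|]$; I would apply the elementary inequality $\big||a-b|-|c-d|\big|\le|a-c|+|b-d|$ inside the expectation over $\bx$ and then invoke the first (pointwise) bound at both $\bx$ and $\bx^i$, giving $\big|\Inf_i((f_\pi)_\delta)-\Inf_i((f_\delta)_\pi)\big|\le 2\delta|\pi|$.

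I do not anticipate a genuine obstacle here — the whole argument is a short coupling — but the step that needs care is the bookkeeping of exactly which coordinates are re-randomized, and with what probability (probability $\delta$ to re-randomize, hence probability $\delta/2$ to actually change value), in each of the two orders of composing ``restrict'' and ``smooth''; getting this right is what produces the clean factor $|\pi|$ rather than $2|\pi|$. It is also worth keeping in mind throughout that the influence here is the metric-space version for real-valued functions, not the boolean not-equals influence.
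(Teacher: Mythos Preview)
Your proposal is correct and is essentially the same coupling argument as the paper's proof: both write $(f_\pi)_\delta(x)$ and $(f_\delta)_\pi(x)$ as expectations of $f$ over two noise processes that differ only on the $|\pi|$ restricted coordinates, couple them so that disagreement requires one of those $|\pi|$ coordinates to flip (probability at most $|\pi|\cdot\delta/2$), and then use $|f(\bm{z})-f(\bm{z}')|\le 2\cdot\Ind[\bm{z}\ne\bm{z}']$ and the triangle-type inequality for the influence part. The only cosmetic difference is that you first reduce to $x$ consistent with $\pi$, whereas the paper works directly with arbitrary $x$ by explicitly ordering the ``apply $\pi$'' and ``apply noise'' steps; both are equally clean.
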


\begin{proof}
    Fix $x \in \bn$ and consider the following procedure for calculating $(f_\pi)_\delta(x)$:
    \begin{enumerate}
        \item Set $\by \gets x$ and draw $s_1, s_2, \ldots, s_n$ independently from $\mathrm{Beroulli}(\delta / 2)$.
        \item For each $i \in [n]$, negate $\by_i$ if $s_i = 1$.
        \item For each constraint ``$x_i = b$'' in $\pi$, set the $i$-th bit of $\by$ to $b$.
    \end{enumerate}
    We can easily verify that $(f_\pi)_\delta(x) = \Ex[f(\by)]$, where the expectation is over the randomness in $s$.
    
    Furthermore, $(f_\delta)_\pi(x)$ can be defined by an almost identical procedure, with Steps 2~and~3 performed in reverse order: We start with $\bz = x$ and draw $s \in \bits^n$ randomly. We set $\bz_i$ to $b$ for each constraint ``$x_i = b$'' in $\pi$, and then negate $\bz$ according to the non-zero entries in $s$. Similarly, we have $(f_\delta)_\pi(x) = \Ex[f(\bz)]$.
    
    We can couple the two procedures by sharing the random bits $s_1$ through $s_n$. Note that if $s_i = 0$ holds for every index $i$ that appears in $\pi$, we would end up with $\by = \bz$. In other words, $\by$ and $\bz$ may differ only when $s_i = 1$ for some index $i$ that appears in $\pi$, which, by a union bound, happens with probability $\le |\pi| \cdot (\delta/2)$. Since $f$ has codomain $\bits$, we have
    \[
        |(f_\pi)_\delta(x) - (f_\delta)_\pi(x)|
    =   \left|\Ex[f(\by)] - \Ex[f(\bz)]\right|
    \le \Ex\left[|f(\by) - f(\bz)|\right]
    \le   2\Pr[\by \ne \bz]
    \le \delta|\pi|,
    \]
    where the probability and expectations are over the coupling of $(\by, \bz)$ defined earlier.
    
    The second part of the fact follows immediately: the first part implies
    \[
        |(f_\pi)_\delta(x) - (f_\pi)_\delta(y)|
    -   |(f_\delta)_\pi(x) - (f_\delta)_\pi(y)|
    \in [-2\delta|\pi|, 2\delta|\pi|]
    \]
    for every $x, y \in \bn$. Therefore, the difference between the influences,
    \[
        \Inf_i((f_\pi)_\delta) - \Inf_i((f_\delta)_\pi)
    =   \Ex_{\bx\sim\bn}\left[|(f_\pi)_\delta(\bx) - (f_\pi)_\delta(\bx^{\sim i})| - |(f_\delta)_\pi(\bx) - (f_\delta)_\pi(\bx^{\sim i})|\right],
    \]
    is also in $[-2\delta|\pi|, 2\delta|\pi|]$.
\end{proof}

\begin{claim}[Runtime]
\label{claim:runtime on smoothed}
For all $d, s \in \N$ and $\tau, \delta > 0$ that satisfy $\tau > 2\delta d$, assuming that variable influences of $f_\delta$ and its subfunctions can be computed exactly in unit time,
the algorithm $\BuildDT_M({f_\delta}, \varnothing, s, d, \tau)$ runs in time
\[
n \cdot \poly(s) \cdot \left(\frac{1}{e\delta(\tau - 2\delta d)^2}\right)^{O(d)}.
\]
In particular, for $\delta = \tau/(4d)$, the runtime is $n\cdot \poly(s) \cdot (d/\tau)^{O(d)}$. 
\end{claim}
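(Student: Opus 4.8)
The plan is to mimic the runtime analysis of \Cref{claim: runtime}, with the only substantive change being how we bound the size of the candidate set $S$ at Step 4(a). Recall that in the realizable setting we used the fact that a size-$s$ decision tree has total influence at most $\log s$; here $f_\delta$ need not be a decision tree at all, so instead we will use \Cref{fact:total influence smoothed}, which bounds the total \emph{squared} influence of any $\delta$-smoothed function by $1/(e\delta)$. The subtlety, flagged already in the text preceding the claim, is that $\BuildDT$ is run on $f_\delta$, so the restricted function it encounters at a node reached by restriction $\pi$ is $(f_\delta)_\pi$, whereas \Cref{fact:total influence smoothed} applies directly only to $(f_\pi)_\delta$. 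This is exactly what \Cref{fact:similar infuence} is for.

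First I would fix a restriction $\pi$ with $|\pi| \le d$ and bound $|S|$ where $S = \{ i : \Inf_i((f_\delta)_\pi) \ge \tau\}$. For each such $i$, \Cref{fact:similar infuence} gives $\Inf_i((f_\pi)_\delta) \ge \Inf_i((f_\delta)_\pi) - 2\delta|\pi| \ge \tau - 2\delta d$, which is positive by the hypothesis $\tau > 2\delta d$. Now $\Inf_i((f_\pi)_\delta) = \Ex_{\bx}[|(f_\pi)_\delta(\bx) - (f_\pi)_\delta(\bx^{\sim i})|]$, and since $(f_\pi)_\delta$ takes values in $[-1,1]$ the integrand is at most $2$, so $\Ex_{\bx}[((f_\pi)_\delta(\bx) - (f_\pi)_\delta(\bx^{\sim i}))^2] \ge \Inf_i((f_\pi)_\delta)^2 \ge (\tau - 2\delta d)^2$ (here I am using that $|a| \le 2$ implies $a^2 \ge$ nothing directly — rather, squared influence dominates squared expectation: $\Ex[Z^2] \ge (\Ex|Z|)^2$ by Cauchy–Schwarz, where $Z = (f_\pi)_\delta(\bx) - (f_\pi)_\delta(\bx^{\sim i})$). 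Applying \Cref{fact:total influence smoothed} to $f_\pi$,
\[
|S| \cdot (\tau - 2\delta d)^2 \le \sum_{i=1}^n \Ex_{\bx}\left[\left((f_\pi)_\delta(\bx) - (f_\pi)_\delta(\bx^{\sim i})\right)^2\right] \le \frac{1}{e\delta},
\]
so $|S| \le \frac{1}{e\delta(\tau - 2\delta d)^2}$.

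With this bound on $|S|$ in hand, the rest is a verbatim repeat of the bookkeeping in \Cref{claim: runtime}: a restriction is extended only by variables in the current $S$, and $\BuildDT$ halts once $|\pi| = d$, so the number of distinct restrictions ever constructed is at most $\sum_{k=1}^d |S|^k = \left(\frac{1}{e\delta(\tau-2\delta d)^2}\right)^{O(d)}$; memoization via $M$ ensures Step 4 runs at most once per $(\pi, s)$ pair, giving $s \cdot \left(\frac{1}{e\delta(\tau-2\delta d)^2}\right)^{O(d)}$ recursive calls; and the non-recursive cost of each call is $O(n + s|S|)$. Multiplying the two yields $n \cdot \poly(s) \cdot \left(\frac{1}{e\delta(\tau-2\delta d)^2}\right)^{O(d)}$. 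Finally, substituting $\delta = \tau/(4d)$ gives $\tau - 2\delta d = \tau/2$ and $e\delta(\tau-2\delta d)^2 = e \cdot \frac{\tau}{4d} \cdot \frac{\tau^2}{4} = \Theta(\tau^3/d)$, so the base is $\Theta(d/\tau^3)$, and raising to the $O(d)$ power absorbs the cube into the constant in the exponent, leaving $n \cdot \poly(s) \cdot (d/\tau)^{O(d)}$.

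I do not anticipate a genuine obstacle — the claim is essentially a composition of \Cref{fact:total influence smoothed}, \Cref{fact:similar infuence}, Cauchy–Schwarz, and the already-established counting argument of \Cref{claim: runtime}. The one point requiring mild care is making sure the hypothesis $\tau > 2\delta d$ is used precisely where it is needed, namely to guarantee $\tau - 2\delta d > 0$ so that squaring the influence lower bound is valid and the denominator $e\delta(\tau-2\delta d)^2$ is a positive quantity; everything downstream is routine.
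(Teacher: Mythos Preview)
Your proposal is correct and essentially identical to the paper's proof: both bound $|S|$ by transferring the influence lower bound from $(f_\delta)_\pi$ to $(f_\pi)_\delta$ via \Cref{fact:similar infuence}, then pass from the $L^1$-type influence to the squared version (the paper calls this Jensen's inequality, you call it Cauchy--Schwarz; same thing here), apply \Cref{fact:total influence smoothed} to $f_\pi$, and finish with the counting argument from \Cref{claim: runtime}.
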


\begin{proof}
    As in the proof of \Cref{claim: runtime}, it suffices to show that when invoking $\BuildDT_M(f_\delta, \varnothing, s, d, \tau)$, at most $s \cdot \left(\frac{1}{e\delta(\tau - 2\delta d)^2}\right)^{O(d)}$ different parameter tuples are passed to the recursive calls. It is, in turn, sufficient to prove that $|S| \le \frac{1}{e\delta(\tau - 2\delta d)^2}$ holds for every recursive call $\BuildDT_M(f_\delta, \pi, s', d, \tau)$, where $S$ is the set of indices $i$ that satisfy $\Inf_i((f_\delta)_\pi) \ge \tau$. We note that
    \begin{align*}
        i \in S
    &\iff   \Inf_i((f_\delta)_\pi) \ge \tau\\
    &\implies \Inf_i((f_\pi)_\delta) \ge \tau - 2\delta d \tag{\Cref{fact:similar infuence} and $|\pi| \le d$}\\
    &\iff \Ex_{\bx \sim \bn}\left[|(f_\pi)_\delta(\bx) - (f_\pi)_\delta(\bx^{\sim i})|\right] \ge \tau - 2\delta d \tag{definition of influence}\\
    &\implies \Ex_{\bx \sim \bn}\left[((f_\pi)_\delta(\bx) - (f_\pi)_\delta(\bx^{\sim i}))^2\right] \ge (\tau - 2\delta d)^2. \tag{Jensen's inequality and $\tau > 2\delta d$}
    \end{align*}
    Applying \Cref{fact:total influence smoothed} to $f_\pi$ shows that the above can hold for at most $\frac{1/(e\delta)}{(\tau - 2\delta d)^2}$ different indices $i$. This proves $|S| \le \frac{1}{e\delta(\tau - 2\delta d)^2}$ and finishes the proof.
\end{proof}

Now we put everything together to prove our main theorem.

\begin{proof}[Proof of \Cref{thm:main}]
Let $d\coloneqq \log(s/\eps)$, $\tau \coloneqq \frac{\eps}{\log s}$ and $\delta \coloneqq \frac{\tau}{4d}$. By~\Cref{lemma:correctness on smoothed}, $\BuildDT_M(f_\delta, \varnothing, s, d, \tau)$ returns a decision tree $T$ that satisfies $\dist(T, f)\le \opt_s + 4\eps$.

For the runtime, in~\Cref{claim:runtime on smoothed} we again assumed that the influences of $f_\delta$ and its restrictions can be computed exactly in unit time. As in the proof of \Cref{lem: main realizable}, estimating these influences up to an $O(\tau)$ additive error would suffice. Given query access to $f$, $(f_\delta)_\pi(x)$ can be estimated up to $O(\tau)$ error with probability $1 - \delta$ using $O(\log(1/\delta)/\tau^2)$ queries for any restriction $\pi$ and input $x$. Then, by randomly sampling $O(\log(1/\delta)/\tau^2)$ copies of $\bx \sim \bits$ and estimating both $(f_\delta)_\pi(\bx)$ and $(f_\delta)_\pi(\bx^{\oplus i})$, we can estimate $\Inf_i((f_\delta)_\pi)$ up to $O(\tau)$ error with probability $1 - O(\log(1/\delta)/\tau^2)\cdot \delta$.

By~\Cref{claim:runtime on smoothed}, the number of variable influences that need to be computed is at most $n\cdot(d/\tau)^{O(d)}$. By setting $\delta < 1/(n^2\cdot (d/\tau)^{O(d)})$, we can ensure that
\[
    n\cdot(d/\tau)^{O(d)} \cdot O(\log(1/\delta)/\tau^2)\cdot \delta \ll 1,
\]
so that w.h.p.\ all the influence estimates are accurate up to $O(\tau)$ error. Note that estimating each influence takes $[O(\log(1/\delta)/\tau^2)]^2$ queries and thus runs in time
\[
    n \cdot [O(\log(1/\delta)/\tau^2)]^2
=   \tilde O(n) \cdot \poly(d/\tau).
\]
Together with~\Cref{claim:runtime on smoothed}, this upper bounds the overall runtime of the algorithm by
\[
n \cdot \poly(s) \cdot \left(\frac{d}{\tau}\right)^{O(d)} \cdot \tilde O(n) \cdot \poly(d/\tau)
\le   \tilde O(n^2) \cdot (s/\eps)^{O(\log((\log s)/\eps))}. \qedhere 
\]
\end{proof} 


\section{Conclusion} 

We have given an $n^{O(\log\log n)}$-time membership query algorithm for properly learning decision trees under the uniform distribution, improving on the previous fastest runtime of $n^{O(\log n)}$.  The obvious open problem is to obtain a polynomial-time algorithm, which would bring the state of the art for proper learning of decision trees into alignment with that of improper learning~\cite{KM93,GKK08}.

Improved learning algorithms for decision trees often go hand in hand with an improved understanding of their structure.  Ehrenfeucht and Haussler's algorithm~\cite{EH89} is based on the observation that one of the subtrees of the root of a size-$s$ decision tree has size $\le s/2$;~\cite{BLT-ITCS} uses the OSSS inequality to show that influence is a good proxy for quality as a root; our algorithm is built on our decision tree pruning lemma, which strengthens the OSSS inequality and the connection between influence and root quality.  A natural next step is to formulate and develop new structural results that will facilitate a polynomial-time algorithm.

 Concluding on a speculative note, we remark that~\cite{BLT-ITCS}'s algorithm is modeled after practical heuristics, such as ID3, CART, and C4.5, for learning decision trees.  These are some of the earliest and most basic algorithms in machine learning, and they continue to be widely used to this day. Our algorithm extends~\cite{BLT-ITCS}'s and circumvents lower bounds that~\cite{BLT-ITCS} had established for their algorithm.  It would be interesting to explore possible practical implications of our work.   
\section*{Acknowledgements}

We are grateful to the anonymous reviewers, whose comments and suggestions have helped improve this paper.

Guy and Li-Yang are supported by NSF CAREER Award 1942123. Mingda is supported by DOE Award DE-SC0019205 and ONR Young Investigator Award N00014-18-1-2295. Jane is supported by NSF Award CCF-2006664. 

\bibliography{proper}{}
\bibliographystyle{alpha}

\end{document}